\documentclass{article}

\PassOptionsToPackage{numbers,sort&compress}{natbib}
\usepackage[preprint]{neurips_2026}
\makeatletter\renewcommand{\@noticestring}{}\makeatother  

\usepackage[utf8]{inputenc}
\usepackage[T1]{fontenc}
\usepackage{hyperref}
\usepackage{url}
\usepackage{booktabs}
\usepackage{amsfonts}
\usepackage{amsmath}
\usepackage{amssymb}
\usepackage{nicefrac}
\usepackage{microtype}
\usepackage{xcolor}
\usepackage{graphicx}
\usepackage{algorithm}
\usepackage{algorithmic}
\usepackage{subcaption}
\usepackage{multirow}
\usepackage{enumitem}
\usepackage{pifont}
\usepackage{amsthm}
\usepackage{float}

\newtheorem{proposition}{Proposition}
\theoremstyle{remark}
\newtheorem*{remark}{Remark}

\newcommand{\method}{\textsc{CapsID}}
\newcommand{\sembpe}{\textsc{SemanticBPE}}

\newcommand{\R}{\mathbb{R}}
\newcommand{\E}{\mathbb{E}}

\title{CapsID: Soft-Routed Variable-Length Semantic IDs\\for Generative Recommendation}

\author{%
  Wenzhuo Cheng\thanks{Equal contribution.} \\
  \And
  Menghang Gong\footnotemark[1] \\          
  \And
  Qixin Guo 
  \AND
  Hang Zheng 
  \And
  Zhaobin Yang
  \And
  Jianguo Lou
  \And
  Zhengwei Zheng\thanks{Corresponding author. Correspondence to: <zhengwei.zzw@gmail.com>.} \\
}

\begin{document}
\maketitle

\begin{abstract}
Generative recommendation maps each item to a sequence of Semantic IDs (SIDs) and recasts retrieval as autoregressive token generation. In this paradigm the main bottleneck is the tokenizer rather than the Transformer: residual vector quantization with a hard nearest-neighbor assignment at every layer collapses multi-faceted item semantics at cluster boundaries and propagates early errors to later SID positions. A common workaround is to append a dense vector or attribute prefix to the SID, but this dual-representation design inflates inference cost and gives up the simplicity of a generative interface. We address the bottleneck at the tokenizer itself. \method{} replaces hard residual quantization with capsule routing: at each layer an item probabilistically routes to several semantic capsules, the residual is updated by the routed reconstruction rather than by a single winning code, and the SID terminates once the active capsule's confidence is high enough. On top of \method{}, \sembpe{} composes adjacent SID tokens into reusable subwords by combining their co-occurrence with their embedding compatibility. On Amazon Beauty, Sports, Toys, and a 35M-item proprietary industrial catalog, \method{}+\sembpe{} improves Recall@10 by $9.6\%$ on average over ReSID, the strongest single-representation baseline, and matches or exceeds a COBRA-style sparse-dense system on every public benchmark while running at $51\%$ of its inference latency. Ablations show that soft routing, iterative agreement, and confidence-driven length each contribute independently, and the gains are largest on tail items where boundary semantics dominate.
\end{abstract}

\section{Introduction}
\label{sec:intro}

Generative recommendation (GR) has recently emerged as a unified alternative to retrieval-and-ranking pipelines: an item is converted into a short sequence of Semantic IDs (SIDs), and a sequence model generates the SID of the next item a user may consume~\citep{rajput2024tiger}. This formulation is attractive because it turns retrieval into constrained generation, enables prefix sharing across semantically related items, and supports cold-start items through content-derived IDs~\citep{schein2002coldstart}. However, it also shifts a large part of the recommendation problem to a tokenizer. If the tokenizer loses information, the generator can only learn to predict an impoverished target.

This information bottleneck is now reasonably well-documented. UniRec formally argues that generative and discriminative recommenders can be equally expressive if the generator has access to complete item attributes, and that the observed gap mainly arises because SIDs cover only a small subset of those attributes~\citep{wang2026unirec}. GRID-style empirical studies further show that adding more residual quantization layers does not monotonically improve recommendation: deeper SID positions often amplify early quantization errors~\citep{petrov2024grid}. GLASS observes a related rank degradation phenomenon, where predicting the first SID token can worsen the rank of the true item before later tokens attempt to recover it~\citep{wang2026glass}.

Existing systems have taken two broad routes. One route patches the sparse SID after quantization: COBRA cascades a dense vector after the sparse ID and fuses beam scores with vector similarity~\citep{li2025cobra}; UniRec prepends Chain-of-Attribute tokens; LIGER-style systems keep dense retrieval beside SID generation~\citep{chen2026liger}. These methods are effective, but they make inference heavier and system design less generative. The other route is tokenizer-centric: it improves the SID itself so that the generated sequence preserves more item semantics before any dense or attribute patch is added. Along this line, TIGER establishes the RQ-VAE SID backbone~\citep{rajput2024tiger}, LETTER injects collaborative signals into the tokenizer~\citep{yang2024letter}, and ReSID replaces generic LLM embeddings with recommender-native representations and globally aligned quantization~\citep{liang2026resid}---yet all of them keep the hard nearest-neighbor assignment at the heart of residual quantization, which is precisely the step we revisit.

This distinction has practical consequences. Patch systems often require a second retrieval or re-ranking path, additional ANN infrastructure, and a carefully tuned fusion function; their benefits may diminish once the sparse ID is improved. A tokenizer-centric solution should instead satisfy three properties: (i) \emph{semantic adequacy}, so that the SID stores more than a coarse bucket; (ii) \emph{predictive simplicity}, so that the generator can still model the token sequence; and (iii) \emph{deployment compatibility}, so that constrained beam search and trie filtering remain valid~\citep{static2025,onerec2025}. These requirements rule out simply increasing codebook size or SID depth, because both actions enlarge the output space and worsen token predictability.

\method{} replaces winner-take-all residual quantization with soft agreement among capsules. At each layer the item residual routes to several capsules, votes are aggregated, and the residual is updated by their weighted reconstruction. The norm of the selected capsule doubles as a confidence score, which decides whether another SID position is needed. \sembpe{} sits on top: it composes adjacent SID tokens into reusable subwords, but only when both co-occurrence and embedding compatibility back the merge.

This paper makes four contributions:
\begin{enumerate}[leftmargin=*,nosep]
    \item We organize recent SID systems into patch-based and tokenizer-centric designs and argue that a better tokenizer removes much of the need for dense or attribute patches.
    \item We design \method{}, an SID tokenizer built on capsule routing with soft residual assignment, iterative self-correction, and confidence-driven variable length.
    \item We design \sembpe{}, a differentiable subword module that scores merges by both co-occurrence and embedding compatibility, going beyond frequency-only behavior tokenization.
    \item We conduct extensive experiments on three public benchmarks and a 35M-item industrial catalog, showing that \method{}+\sembpe{} consistently outperforms state-of-the-art tokenizer-centric and patch-route systems at a fraction of the cost, validating soft routing as a viable replacement for the hard residual assignment behind current SIDs.
\end{enumerate}

\section{Related Work}
\label{sec:related}

\paragraph{Semantic IDs for generative recommendation.}
TIGER introduced RQ-VAE SIDs for generative retrieval and established the standard recipe of content encoding, residual quantization, and autoregressive SID prediction~\citep{rajput2024tiger}, building on vector quantization and neural discrete representation learning~\citep{van2017vqvae,hou2023vqrec}. LC-Rec improves code usage via Sinkhorn balancing~\citep{zheng2024lcrec,cuturi2013sinkhorn}; LETTER injects collaborative signals into the tokenizer~\citep{yang2024letter}; ETEGRec alternates tokenizer and generator optimization~\citep{liu2025etegrec}; ADA-SID introduces multi-view adaptive quantization~\citep{wang2025adasid}; parallel industrial systems such as DAS~\citep{das2025} and Align3GR~\citep{align3gr2026} further align SID learning with downstream ranking signals; CoFiRec explores coarse-to-fine tokenization at varying granularity~\citep{cofirec2025}. These methods differ in supervision, initialization, and regularization, but share a hard assignment core. In contrast, \method{} changes the assignment operator itself.

\paragraph{Industrial generative retrieval and constrained decoding.}
Large-scale generative recommenders emphasize that SID quality is only useful when the generated path can be decoded efficiently. Building on earlier generative retrieval with constrained decoding~\citep{decao2021autoregressive} and differentiable search indices~\citep{tay2022dsi}, HSTU-style sequential transducers scale the backbone and demonstrate the value of strong sequence modeling~\citep{zhai2024hstu}, while OneRec-like systems and STATIC-style trie decoding show that valid-ID filtering is necessary for production latency~\citep{onerec2025,static2025}. These systems motivate our design constraint: \method{} must emit ordinary discrete IDs at inference. Routing is used to construct better SIDs, not to introduce a new inference-time retrieval interface.

\paragraph{Patching incomplete SID coverage.}
COBRA combines sparse SIDs with dense vectors and BeamFusion, obtaining strong public and industrial results at the price of a roughly two-stage retrieval path~\citep{li2025cobra}. UniRec prepends attribute tokens and shows that attribute coverage can close much of the generative-discriminative gap~\citep{wang2026unirec}. LIGER-style hybrids keep a dense retrieval channel beside generative retrieval~\citep{chen2026liger}. These results strongly support our motivation: the missing information exists, and the question is whether it should be patched after quantization or preserved inside the SID.

\paragraph{Differentiable, adaptive, and recommendation-native tokenizers.}
DIGER uses Gumbel-Softmax to make discrete SID learning differentiable~\citep{chen2026diger}, building on continuous relaxations for categorical variables~\citep{jang2017gumbel,maddison2017concrete}; SA$^2$CRQ truncates hard paths by entropy budgets for adaptive length~\citep{zhang2026sa2crq}; ReSID argues that recommender-native representation learning and global quantization alignment are more important than generic LLM semantic embeddings~\citep{liang2026resid}. Collision-aware approaches such as QuaSID and GR4AD further show that collisions are not merely an implementation detail but a ranking-quality bottleneck~\citep{zhang2025quasid,gr4ad2025}. \method{} is complementary: it uses soft routing as the native quantization primitive and variable length as an outcome of capsule confidence rather than a post-hoc truncation rule.

\paragraph{Dynamic codebooks and streaming systems.}
Recent dynamic-indexing systems such as MERGE monitor cluster occupation, item-to-cluster similarity, and cluster-to-cluster separation in streaming environments~\citep{merge2026}. They point to an important evaluation lesson: a tokenizer should be judged not only by final Recall but also by geometry and occupancy diagnostics. We therefore include collision rate, code predictability, intra-code similarity, and routing convergence as first-class metrics rather than relegating them to implementation details.

\paragraph{Capsule routing and subword composition.}
Capsule networks model part-whole agreement through iterative routing~\citep{sabour2017dynamic,hinton2018matrix}, and MIND applies dynamic routing to user multi-interest extraction~\citep{li2019mind}. We transfer the same idea to item tokenization: capsules are no longer user-interest slots but semantic code candidates. For composition, BPE originated in neural machine translation as a subword segmentation method~\citep{sennrich2016bpe} and was popularized for general language modeling by SentencePiece~\citep{kudo2018sentencepiece}. ActionPiece extends this line to action sequences for generative recommendation~\citep{hsu2025actionpiece}. \sembpe{} differs by scoring merges with semantic compatibility as well as frequency.

\paragraph{Positioning.}
Table~\ref{tab:related} positions \method{} in the design space of recent SID tokenizers along five axes. To our knowledge, \method{} is the only existing method that combines soft probabilistic assignment, iterative agreement, confidence-driven variable length, and semantic-aware composition, while still preserving a single discrete generative interface that is compatible with constrained beam search.

\begin{table}[!htbp]
\centering
\scriptsize
\setlength{\tabcolsep}{4pt}
\renewcommand{\arraystretch}{1.05}
\caption{Design space of SID tokenizers for generative recommendation. ``Soft assign.''$=$probabilistic capsule/component routing instead of argmax; ``Iter. refine''$=$multiple agreement rounds at each layer; ``Var. length''$=$item-dependent number of SID tokens; ``Sub-word''$=$semantic-aware token merging; ``Single-rep.''$=$emits only discrete IDs without a parallel dense channel.}
\label{tab:related}
\begin{tabular}{lccccc}
\toprule
Method & Soft assign. & Iter. refine & Var. length & Sub-word & Single-rep. \\
\midrule
TIGER~\citep{rajput2024tiger}            & \ding{55} & \ding{55} & \ding{55} & \ding{55} & \ding{51} \\
LC-Rec~\citep{zheng2024lcrec}            & \ding{55} & \ding{55} & \ding{55} & \ding{55} & \ding{51} \\
LETTER~\citep{yang2024letter}            & \ding{55} & \ding{55} & \ding{55} & \ding{55} & \ding{51} \\
ETEGRec~\citep{liu2025etegrec}           & \ding{55} & \ding{55} & \ding{55} & \ding{55} & \ding{51} \\
ADA-SID~\citep{wang2025adasid}           & \ding{55} & \ding{55} & \ding{51} & \ding{55} & \ding{51} \\
ActionPiece~\citep{hsu2025actionpiece}   & \ding{55} & \ding{55} & \ding{55} & freq.\;only & \ding{51} \\
DIGER~\citep{chen2026diger}              & Gumbel & \ding{55} & \ding{55} & \ding{55} & \ding{51} \\
SA$^2$CRQ~\citep{zhang2026sa2crq}        & \ding{55} & \ding{55} & post-hoc & \ding{55} & \ding{51} \\
ReSID~\citep{liang2026resid}             & \ding{55} & \ding{55} & \ding{55} & \ding{55} & \ding{51} \\
COBRA~\citep{li2025cobra}                & \ding{55} & \ding{55} & \ding{55} & \ding{55} & \ding{55}\,(+dense) \\
UniRec-CoA~\citep{wang2026unirec}        & \ding{55} & \ding{55} & \ding{55} & \ding{55} & \ding{55}\,(+attr.) \\
\midrule
\textbf{\method{}+\sembpe{} (ours)}      & \textbf{routing} & \textbf{$T$ rounds} & \textbf{confidence} & \textbf{semantic} & \textbf{\ding{51}} \\
\bottomrule
\end{tabular}
\end{table}
\section{Method}
\label{sec:method}

Let $\mathbf{x}_i \in \R^d$ denote the representation of item $i$, constructed from content, collaborative, or multi-modal encoders depending on the dataset. The goal is to map $\mathbf{x}_i$ into a variable-length SID $\mathbf{s}_i=(s_{i,1},\ldots,s_{i,L_i})$ that is compact, predictive, and collision-resistant. Figure~\ref{fig:architecture} summarizes the pipeline.

\begin{figure}[!htbp]
    \centering
    \includegraphics[width=\textwidth]{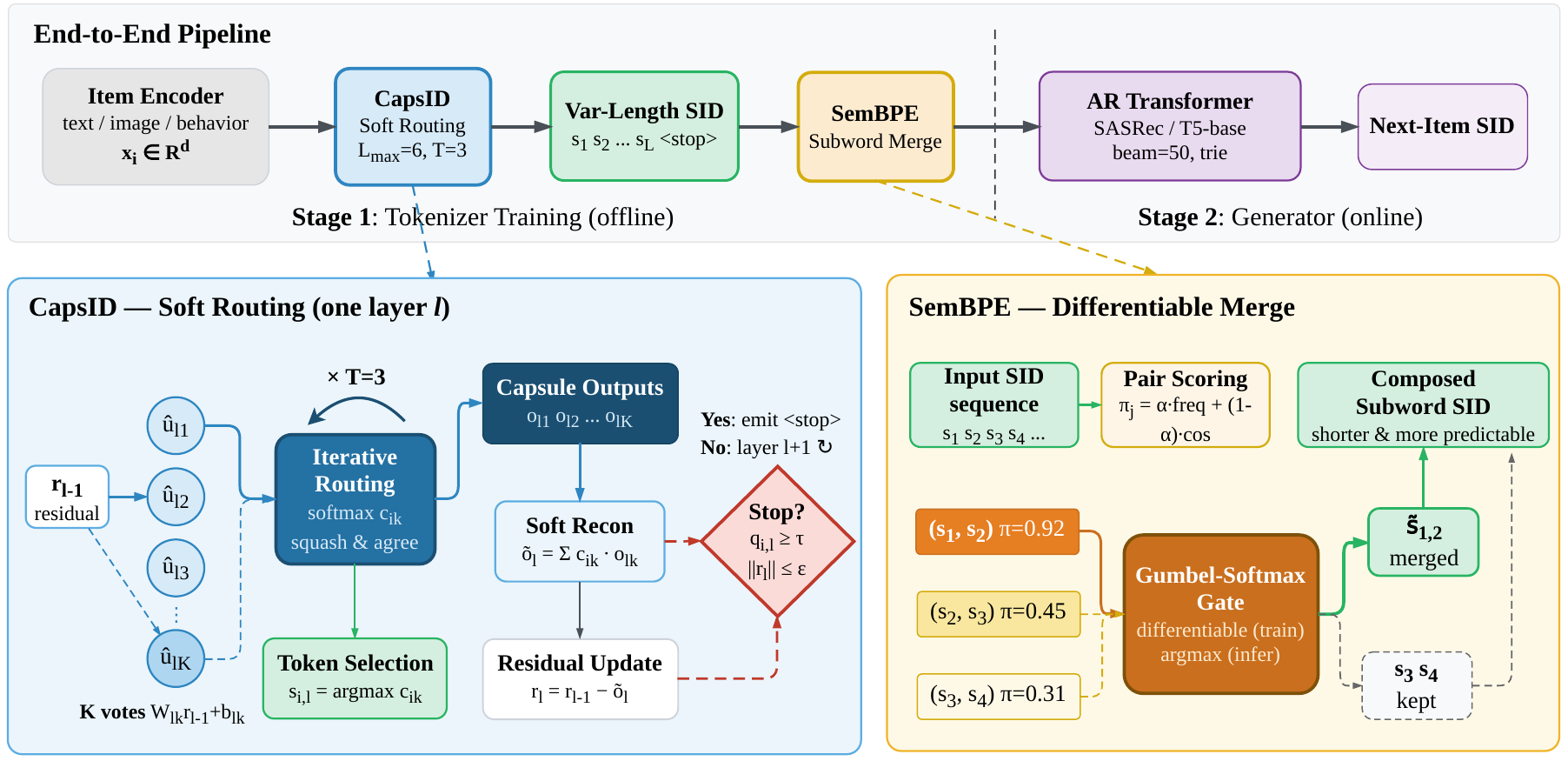}
    \caption{Overview of \method{}+\sembpe{}.
             \textbf{Top:} item features pass through a stack of capsule layers
             with confidence-driven early stopping, \sembpe{} merges
             semantically compatible adjacent tokens, and an autoregressive
             Transformer generates the next item's SID via trie-constrained
             beam search.
             \textbf{Bottom left:} inside one \method{} layer, soft routing
             replaces hard $\arg\max$; votes are reconciled over a few
             agreement iterations, and a stop gate fires when winner
             confidence is high or residual norm is small.
             \textbf{Bottom right:} \sembpe{} scores SID pairs by a
             frequency--similarity mixture and merges them with a
             Gumbel-Softmax gate (trained) or $\arg\max$ (inference),
             yielding reusable subword tokens.}
    \label{fig:architecture}
\end{figure}

\paragraph{Design desiderata.}
The tokenizer is designed around three invariants. First, the emitted representation must remain a finite discrete sequence so that all existing constrained decoding machinery applies. Second, uncertainty should be represented before discretization, not only after decoding; otherwise all uncertainty has already been collapsed into a wrong token. Third, the tokenizer should expose interpretable diagnostics: routing weights reveal which semantic facets explain an item, capsule activation measures confidence, and residual norms measure unexplained information. These diagnostics are used in Section~\ref{sec:experiments} to check whether improvements come from meaningful tokenization rather than from a larger output space.

\subsection{Soft residual routing}
\label{sec:routing}

At SID layer $\ell$, we maintain $K_\ell$ capsules. Capsule $k$ has a pose transform $\mathbf{W}_{\ell k}$ and bias $\mathbf{b}_{\ell k}$. Given residual $\mathbf{r}_{i,\ell-1}$ with $\mathbf{r}_{i,0}=\mathbf{x}_i$, each capsule produces a vote
\begin{equation}
    \hat{\mathbf{u}}_{i,\ell k}=\mathbf{W}_{\ell k}\mathbf{r}_{i,\ell-1}+\mathbf{b}_{\ell k}.
\end{equation}
Routing starts from logits $a_{i,\ell k}^{(0)}=0$ and iterates for $T$ rounds:
\begin{align}
    c_{i,\ell k}^{(t)} &= \mathrm{softmax}_k(a_{i,\ell k}^{(t-1)}), \\
    \mathbf{v}_{i,\ell}^{(t)} &= \sum_k c_{i,\ell k}^{(t)}\hat{\mathbf{u}}_{i,\ell k}, \\
    \mathbf{o}_{i,\ell}^{(t)} &= \mathrm{squash}(\mathbf{v}_{i,\ell}^{(t)}), \\
    a_{i,\ell k}^{(t)} &= a_{i,\ell k}^{(t-1)} + \hat{\mathbf{u}}_{i,\ell k}^{\top}\mathbf{o}_{i,\ell}^{(t)}.
\end{align}
We use $\mathrm{squash}(\mathbf{z}) = \frac{\|\mathbf{z}\|^2}{0.5+\|\mathbf{z}\|^2}\frac{\mathbf{z}}{\|\mathbf{z}\|}$, whose norm lies in $[0,1)$ and is sensitive to small magnitudes. We also define per-capsule outputs $\mathbf{o}_{i,\ell k}\!:=\!\mathrm{squash}(\hat{\mathbf{u}}_{i,\ell k})$ that share the same nonlinearity but are computed independently for each capsule (used in the residual update below) and do not depend on the routing iteration $t$. The emitted token and confidence are
\begin{equation}
    s_{i,\ell}=\arg\max_k c_{i,\ell k}^{(T)},\qquad
    q_{i,\ell}=\max_k c_{i,\ell k}^{(T)}\|\mathbf{o}_{i,\ell}^{(T)}\|.
\end{equation}
The residual update is the part where soft routing departs from hard quantization: instead of subtracting only the winning capsule, we subtract the routed reconstruction,
\begin{equation}
    \mathbf{r}_{i,\ell}=\mathbf{r}_{i,\ell-1}-\sum_k c_{i,\ell k}^{(T)}\mathbf{o}_{i,\ell k}.
    \label{eq:soft_residual}
\end{equation}
Equation~\eqref{eq:soft_residual} (lines~9--11 of Algorithm~\ref{alg:capsid}) keeps the partial agreement with secondary capsules instead of throwing it away. A boundary item like a ``travel cooking kit'' no longer has to choose between travel and cooking; both facets contribute to its reconstruction, and only the unexplained part flows into the next layer's residual. This is not the same as replacing $\arg\max$ with a temperature-softmax, because the residual update itself uses the routed reconstruction, so deeper layers see a smaller and cleaner error signal. Two implementation details matter in practice. We $\ell_2$-normalize the item embedding before routing (line~1) so that high-norm items cannot dominate the agreement scores, and we keep capsule parameters separate at each depth so that early layers can specialize in coarse facets while later layers refine the residual.

\subsection{Confidence-driven variable length}
\label{sec:varlen}

Fixed-length SIDs impose the same token budget on easy and ambiguous items. \method{} stops when the residual has been sufficiently explained:
\begin{equation}
L_i = \min\left\{\ell: q_{i,\ell}\geq \tau \;\text{or}\; \|\mathbf{r}_{i,\ell}\|_2\leq \epsilon \;\text{or}\; \ell=L_{\max}\right\}.
\end{equation}
This design combines \emph{three forward stopping rules} (a hard cap $L_{\max}$, residual-norm stopping, and confidence stopping, the early-exit clause at line~12 of Algorithm~\ref{alg:capsid}) with \emph{one training-time regularizer}, the length penalty $\mathcal{L}_{\mathrm{len}}\!=\!\E_i[L_i]$ in Eq.~\eqref{eq:final_loss}. Together these four safeguards prevent length explosion while addressing the GRID observation that blindly adding layers can hurt: uncertain items receive more steps only when their residual still contains useful signal, and confident items stop early.

The stopping rule also changes the semantics of collisions. In a fixed-depth hard SID, two tail items that share all four positions are indistinguishable unless an artificial disambiguation token is appended. In \method{}, two items may share the same argmax tokens but differ in routing weights and stopping confidence during tokenizer training; the learned generator sees a cleaner set of token targets because ambiguous items are encouraged to stop at stable prefixes rather than continue through low-confidence residual layers. This behavior is similar in spirit to controlled-collision variable-length methods, but it is obtained from the routing dynamics rather than from an external entropy budget.

\subsection{SemanticBPE composition}
\label{sec:sembpe}

Given the SID sequence from \method{}, \sembpe{} learns whether adjacent tokens should be merged into a reusable subword. For pair $(s_j,s_{j+1})$, we compute
\begin{equation}
    m(s_j,s_{j+1}) = \alpha\,\widehat{\mathrm{freq}}(s_j,s_{j+1}) + (1-\alpha)\,\cos(\mathbf{e}_{s_j},\mathbf{e}_{s_{j+1}}),
\end{equation}
where the second term prevents high-frequency but semantically unrelated pairs from being merged. A Gumbel-Softmax gate provides differentiability during training and hard merges at inference. This stage is intentionally lightweight: it improves sequence composition without changing the underlying item-to-SID assignment.

We use a conservative merge policy. A pair is considered only if its semantic similarity exceeds a threshold $\theta$, and the threshold is annealed from strict to moderate during training. This avoids the common BPE failure mode in recommendation: extremely frequent but semantically broad prefix pairs can dominate the vocabulary, increasing popularity bias. Since \method{} already shortens easy items, \sembpe{} is not used to aggressively compress every sequence; it is used to create reusable subwords for stable multi-token motifs.

\subsection{Training objective}
\label{sec:objective}

We use a two-stage protocol inspired by recommender-native tokenizer studies~\citep{liang2026resid}.
\textbf{Stage 1 (tokenizer pretraining)} learns the item projection, capsule transforms $\{\mathbf{W}_{\ell k}\}$, and the \sembpe{} merge MLP using only the tokenizer-side losses (reconstruction, spread, length, and a frequency-based BPE warm-up); the sequence generator is not trained.
\textbf{Stage 2 (generator adaptation)} freezes capsule centers and the \sembpe{} merge MLP weights, then jointly trains the sequence generator together with low-rank routing adapters (rank $r\!=\!8$) and a learnable scalar bias on the \sembpe{} Gumbel gate. The final objective is
\begin{equation}
\mathcal{L}=\mathcal{L}_{\mathrm{NTP}}+\lambda_r\mathcal{L}_{\mathrm{route}}+\lambda_s\mathcal{L}_{\mathrm{spread}}+\lambda_l\mathcal{L}_{\mathrm{len}}+\lambda_b\mathcal{L}_{\mathrm{BPE}},
\label{eq:final_loss}
\end{equation}
where $\mathcal{L}_{\mathrm{NTP}}$ is next-token cross entropy and is active only in Stage 2; $\mathcal{L}_{\mathrm{route}}\!=\!\|\mathbf{x}_i-\hat{\mathbf{x}}_i\|_2^2$ and $\mathcal{L}_{\mathrm{spread}}$ are tokenizer losses with annealed margin from $0.2$ to $0.9$. This separation prevents the generator from chasing a moving SID target while keeping the routing mechanism slightly adaptive to downstream supervision.

\paragraph{Why two stages rather than full joint training?}
A fully joint objective lets the generator chase a moving target while the tokenizer changes the target sequence. ReSID and ETEGRec-style analyses suggest that this self-referential training can be unstable. We therefore first learn a recommendation-sufficient code geometry and then adapt the generator to that geometry. The second stage still allows limited routing adaptation, but capsule centers are frozen to preserve global code semantics and prevent late-stage collapse.

\subsection{Theoretical analysis}
\label{sec:theory}

\begin{algorithm}[!htbp]
\small
\caption{\textsc{CapsID Tokenizer Forward} (one item)}
\label{alg:capsid}
\begin{algorithmic}[1]
\REQUIRE item embedding $\mathbf{x}_i$, capsule transforms $\{\mathbf{W}_{\ell k},\mathbf{b}_{\ell k}\}$, hyperparameters $T,\tau,\epsilon,L_{\max}$
\ENSURE SID sequence $\mathbf{s}_i\!=\!(s_{i,1},\dots,s_{i,L_i})$, confidences $\{q_{i,\ell}\}$
\STATE $\mathbf{r}_{i,0}\leftarrow \mathbf{x}_i / \|\mathbf{x}_i\|$ \hfill\COMMENT{$\ell_2$ normalize}
\FOR{$\ell = 1, \ldots, L_{\max}$}
    \STATE compute votes $\hat{\mathbf{u}}_{i,\ell k}\!=\!\mathbf{W}_{\ell k}\mathbf{r}_{i,\ell-1}+\mathbf{b}_{\ell k}$ for all $k$ \hfill\COMMENT{Eq.~(1)}
    \STATE initialize agreement logits $a_{i,\ell k}^{(0)}\!\leftarrow\!0$
    \FOR{$t = 1, \ldots, T$}
        \STATE $c^{(t)}_{i,\ell k}\!\leftarrow\!\mathrm{softmax}_k(a_{i,\ell k}^{(t-1)})$;\;\; $\mathbf{v}^{(t)}_{i,\ell}\!\leftarrow\!\sum_k c^{(t)}_{i,\ell k}\hat{\mathbf{u}}_{i,\ell k}$
        \STATE $\mathbf{o}^{(t)}_{i,\ell}\!\leftarrow\!\mathrm{squash}(\mathbf{v}^{(t)}_{i,\ell})$;\;\; $a^{(t)}_{i,\ell k}\!\leftarrow\!a^{(t-1)}_{i,\ell k}+\hat{\mathbf{u}}_{i,\ell k}^{\top}\mathbf{o}^{(t)}_{i,\ell}$
    \ENDFOR
    \STATE $s_{i,\ell}\!\leftarrow\!\arg\max_k c^{(T)}_{i,\ell k}$;\;\; $q_{i,\ell}\!\leftarrow\!c^{(T)}_{i,\ell s_{i,\ell}}\!\cdot\!\|\mathbf{o}^{(T)}_{i,\ell}\|$
    \STATE $\mathbf{o}_{i,\ell k}\!\leftarrow\!\mathrm{squash}(\hat{\mathbf{u}}_{i,\ell k})$;\;\; $\mathbf{r}_{i,\ell}\!\leftarrow\!\mathbf{r}_{i,\ell-1}-\sum_k c^{(T)}_{i,\ell k}\mathbf{o}_{i,\ell k}$ \hfill\COMMENT{Eq.~(\ref{eq:soft_residual})}
    \IF{$q_{i,\ell}\!\geq\!\tau$ \OR $\|\mathbf{r}_{i,\ell}\|_2\!\leq\!\epsilon$}
        \STATE $L_i\!\leftarrow\!\ell$;\;\;\textbf{break}
    \ENDIF
\ENDFOR
\STATE \textbf{return} $(s_{i,1},\dots,s_{i,L_i})$, $(q_{i,1},\dots,q_{i,L_i})$
\end{algorithmic}
\end{algorithm}

We give three results that connect the design choices in Sections~\ref{sec:routing}--\ref{sec:varlen} to the quantities reported in Section~\ref{sec:experiments}.

\begin{proposition}[Soft-routing reconstruction is close to hard]
\label{prop:reconstruction}
Let $\mathbf{c}_{\ell k}\in\R^{d_c}$ denote the $k$-th codebook center at depth $\ell$ and let $s_{i,\ell}\!=\!\arg\max_k c^{(T)}_{i,\ell k}$ be the argmax token. Define the hard and soft reconstructions
\begin{equation*}
    \hat{\mathbf{x}}^{\mathrm{hard}}_i \!=\! \sum_{\ell=1}^{L_i}\mathbf{c}_{\ell s_{i,\ell}},\qquad
    \hat{\mathbf{x}}^{\mathrm{soft}}_i \!=\! \sum_{\ell=1}^{L_i}\sum_{k=1}^{K_\ell}c^{(T)}_{i,\ell k}\mathbf{o}_{i,\ell k}.
\end{equation*}
Assume $\|\mathbf{o}_{i,\ell k}-\mathbf{c}_{\ell k}\|_2\!\leq\!\delta$ and $\|\mathbf{c}_{\ell k}\|_2\!\leq\!C$ for all $\ell,k$. Then
\begin{equation}
    \big\|\hat{\mathbf{x}}^{\mathrm{soft}}_i - \hat{\mathbf{x}}^{\mathrm{hard}}_i\big\|_2 \;\leq\; L_i\delta + 2C\sum_{\ell=1}^{L_i}\!\big(1-c^{(T)}_{i,\ell s_{i,\ell}}\big),
    \label{eq:reconstruction_bound}
\end{equation}
and consequently $\big\|\mathbf{x}_i-\hat{\mathbf{x}}^{\mathrm{soft}}_i\big\|_2 \leq \big\|\mathbf{x}_i-\hat{\mathbf{x}}^{\mathrm{hard}}_i\big\|_2 + L_i\delta + 2C\sum_\ell(1-c^{(T)}_{i,\ell s_{i,\ell}})$.
\end{proposition}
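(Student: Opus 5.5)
The bound follows from a layer-by-layer triangle inequality. We write the difference as a sum over depths, $\hat{\mathbf{x}}^{\mathrm{soft}}_i-\hat{\mathbf{x}}^{\mathrm{hard}}_i=\sum_{\ell=1}^{L_i}\mathbf{d}_\ell$, where
\[
\mathbf{d}_\ell=\sum_k c^{(T)}_{i,\ell k}\mathbf{o}_{i,\ell k}-\mathbf{c}_{\ell s_{i,\ell}} .
\]
It then suffices to show $\|\mathbf{d}_\ell\|_2\le \delta+2C(1-c^{(T)}_{i,\ell s_{i,\ell}})$ for each $\ell$ and sum. The only structural facts used are that $c^{(T)}_{i,\ell\cdot}$ is a softmax output, so its entries are nonnegative and sum to one, and the two assumed bounds on $\mathbf{o}$ and $\mathbf{c}$.

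For a fixed layer, abbreviate $p_k=c^{(T)}_{i,\ell k}$ and $s=s_{i,\ell}$. Insert the centers by splitting
\[
\mathbf{d}_\ell=\sum_k p_k(\mathbf{o}_{i,\ell k}-\mathbf{c}_{\ell k})+\Big(\sum_k p_k\mathbf{c}_{\ell k}-\mathbf{c}_{\ell s}\Big).
\]
The first term is a convex combination of vectors of norm at most $\delta$, so its norm is at most $\delta$. For the second term, use $\sum_k p_k=1$ to rewrite it as
\[
\sum_{k\neq s}p_k(\mathbf{c}_{\ell k}-\mathbf{c}_{\ell s}).
\]
Each difference has norm at most $2C$, and $\sum_{k\ne s}p_k=1-p_s$, so this term is bounded by $2C(1-p_s)$. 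Summing over $\ell=1,\dots,L_i$ gives \eqref{eq:reconstruction_bound}.

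For the consequence, apply the triangle inequality:
\[
\|\mathbf{x}_i-\hat{\mathbf{x}}^{\mathrm{soft}}_i\|_2\le\|\mathbf{x}_i-\hat{\mathbf{x}}^{\mathrm{hard}}_i\|_2+\|\hat{\mathbf{x}}^{\mathrm{hard}}_i-\hat{\mathbf{x}}^{\mathrm{soft}}_i\|_2 ,
\]
and substitute the bound just proved. No property of the routing iterations, the squash nonlinearity, or the stopping rule is needed beyond $L_i$ being the common number of layers in both sums. The vectors $\mathbf{o}_{i,\ell k}$ may depend on the residual, and hence on earlier layers, but the assumption $\delta$ is uniform over $\ell,k$, so this dependence does not matter.

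The one point needing care is rewriting the center term via $\sum_k p_k=1$. A naive bound, $\|\sum_k p_k\mathbf{c}_{\ell k}\|+\|\mathbf{c}_{\ell s}\|\le 2C$, loses the factor $(1-p_s)$, so the centering trick is what makes the bound vanish as routing becomes confident. A refinement to $C(1-p_s)$ plus a term in $p_s$ is not needed, since the stated constant $2C$ is exactly what the pairwise-difference bound gives.
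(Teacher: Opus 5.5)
Your proof is correct and matches the paper's argument essentially step for step. You split each layer's term as $\sum_k p_k(\mathbf{o}_{i,\ell k}-\mathbf{c}_{\ell k})+\sum_{k\neq s}p_k(\mathbf{c}_{\ell k}-\mathbf{c}_{\ell s})$ using $\sum_k p_k=1$, bound the pieces by $\delta$ and $2C(1-p_s)$, sum over layers, and spell out the one-line triangle inequality for the ``consequently'' clause that the paper leaves implicit.
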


\begin{proof}
For each layer $\ell$, write $w_k\!:=\!c^{(T)}_{i,\ell k}\geq 0$ with $\sum_k w_k\!=\!1$ and $s\!:=\!s_{i,\ell}$. Then
\begin{align*}
\Big\|\sum_k w_k\mathbf{o}_{\ell k} - \mathbf{c}_{\ell s}\Big\|_2
&= \Big\|\sum_k w_k\big(\mathbf{o}_{\ell k}-\mathbf{c}_{\ell k}\big) + \sum_k w_k\mathbf{c}_{\ell k} - \mathbf{c}_{\ell s}\Big\|_2 \\
&\leq \sum_k w_k\delta + \Big\|\sum_{k\neq s}w_k(\mathbf{c}_{\ell k}-\mathbf{c}_{\ell s})\Big\|_2 \;\leq\; \delta + 2C(1-w_s),
\end{align*}
by the triangle inequality and $\|\mathbf{c}_{\ell k}-\mathbf{c}_{\ell s}\|\!\leq\!2C$. Summing over $\ell$ and applying the triangle inequality once more gives Eq.~\eqref{eq:reconstruction_bound}.
\end{proof}

\begin{remark}
The bound is tight when $w_s\!=\!1$ (hard regime) and $\delta\!=\!0$, in which case the soft and hard reconstructions coincide. In our experiments the average winner mass $\bar{w}_s\!=\!0.86$ (Figure~\ref{fig:diagnostics}(c)) and $\delta$ stays small after capsule warm-up, so soft routing reconstructs almost as well as hard but distributes mass to secondary capsules; that is why intra-code similarity rises in Table~\ref{tab:tokenizer_quality} without losing reconstruction quality.
\end{remark}

\begin{proposition}[Expected length upper bound]
\label{prop:length}
Let $g_\ell(\mathbf{x})\!=\!\Pr[q_{i,\ell}\geq\tau\;\text{or}\;\|\mathbf{r}_{i,\ell}\|_2\leq\epsilon \mid \ell\leq L_i]$ be the per-layer stopping probability under the law of $\mathbf{x}$. If $\inf_{\mathbf{x}}g_\ell(\mathbf{x})\!\geq\!g\!>\!0$ for all $\ell\!\geq\!2$, then
\begin{equation}
    \E[L_i]\;\leq\;1 + \sum_{\ell=2}^{L_{\max}}(1-g)^{\ell-2}\;\leq\;\min\!\left(L_{\max},\,1+\tfrac{1}{g}\right).
    \label{eq:length_bound}
\end{equation}
\end{proposition}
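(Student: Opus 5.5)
The plan is to prove the bound with the tail-sum formula for a bounded, positive integer-valued random variable. Since $1\le L_i\le L_{\max}$ by the stopping rule, we have
\begin{equation*}
\E[L_i]=\sum_{\ell=1}^{L_{\max}}\Pr[L_i\geq \ell]=1+\sum_{\ell=2}^{L_{\max}}\Pr[L_i\geq \ell].
\end{equation*}
The term $\Pr[L_i\ge 1]=1$ is kept exactly. This is why the bound starts with a bare $1$ and why no stopping assumption is needed at layer $1$.

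Next I bound each tail probability by chaining conditional survival probabilities. The event $\{L_i\geq \ell\}$ says the process reached layer $\ell$, meaning it did not stop at any of layers $1,\dots,\ell-1$. Hence
\begin{equation*}
\Pr[L_i\geq \ell]=\prod_{j=1}^{\ell-1}\Pr[L_i\geq j+1\mid L_i\geq j]=\prod_{j=1}^{\ell-1}\bigl(1-g_j\bigr),
\end{equation*}
where $g_j$ is the conditional stopping probability in the proposition. For the $j=1$ factor I use only the trivial bound $1-g_1\le 1$. For $j\ge 2$ I use the hypothesis $g_j\ge g$, giving $1-g_j\le 1-g$. So
\begin{equation*}
\Pr[L_i\ge\ell]\le(1-g)^{\ell-2}.
\end{equation*}
Summing over $\ell=2,\dots,L_{\max}$ gives the first inequality of Eq.~\eqref{eq:length_bound}.

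The second inequality has two parts.
\begin{itemize}
\item Each summand $(1-g)^{\ell-2}$ is at most $1$, and there are $L_{\max}-1$ of them, so the right-hand side is at most $L_{\max}$.
\item Extending the finite geometric sum to infinity gives $1+\sum_{m\ge0}(1-g)^m=1+1/g$.
\end{itemize}
Taking the smaller of the two bounds yields $\min(L_{\max},1+1/g)$.

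The main obstacle is interpreting the hypothesis carefully. The quantity $g_\ell$ is written as a probability "under the law of $\mathbf{x}$", yet it is also written $g_\ell(\mathbf{x})$ with an infimum over $\mathbf{x}$. I would read $g_\ell(\mathbf{x})$ as a pointwise stopping probability at layer $\ell$, given survival to layer $\ell$. If the forward pass is deterministic for fixed $\mathbf{x}$, this is $\{0,1\}$-valued, and $\inf_{\mathbf{x}}g_\ell\ge g>0$ then forces deterministic stopping at layer $2$. The argument stays valid under that reading, only more conservative than necessary. To state the chain rule cleanly, I would first prove the bound conditionally on $\mathbf{x}$, where the product formula holds with factors at most $1-g$ for $j\ge2$. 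I would then take the expectation over $\mathbf{x}$; since the bound does not depend on $\mathbf{x}$, it passes through unchanged.

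The only other technical point is conditioning on events of probability zero. If $\Pr[L_i\ge j]=0$ for some $j$, then every later tail probability is zero and the inequality holds trivially.
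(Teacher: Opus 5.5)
Your proof is correct and follows the same route as the paper. Both use the tail-sum formula $\E[L_i]=\sum_{\ell\ge 1}\Pr[L_i\ge\ell]$ together with the geometric bound $\Pr[L_i\ge\ell]\le(1-g)^{\ell-2}$, obtained by ignoring layer $1$ and using $g_j\ge g$ for $j\ge 2$. Your version is slightly more careful than the paper's, which writes $\Pr[L_i\ge\ell]$ as an \emph{equality} with the no-stop probability over layers $2,\ldots,\ell-1$ and leaves the $L_{\max}$ half of the minimum implicit, whereas you make the chain-rule factorization, the trivial layer-$1$ factor, and the count of $L_{\max}-1$ summands explicit.
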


\begin{proof}
$\Pr[L_i\!\geq\!\ell]\!=\!\Pr[\text{no stop at layers }2,\ldots,\ell-1]\!\leq\!(1-g)^{\ell-2}$ for $\ell\!\geq\!2$. Then $\E[L_i]\!=\!\sum_{\ell\geq 1}\Pr[L_i\!\geq\!\ell]\!\leq\!1+\sum_{\ell=2}^{L_{\max}}(1-g)^{\ell-2}\!\leq\!1+\tfrac{1}{g}$ as a geometric tail.
\end{proof}

\begin{remark}
Equation~\eqref{eq:length_bound} guarantees that the four safeguards in Section~\ref{sec:varlen} keep the expected length finite even before $L_{\max}$ binds. Empirically (Figure~\ref{fig:analysis}(b)), the confidence and residual rules together account for the eventual stop of $90$--$92\%$ of items, and the dataset-level average length $\bar L$ stays in $[3.41,\,3.89]$ (Figure~\ref{fig:analysis}(a))---well below the hard cap $L_{\max}\!=\!6$.
\end{remark}

\begin{proposition}[Routing as a single E-step of capsule EM]
\label{prop:em}
At depth $\ell$, model the residual $\mathbf{r}_{i,\ell-1}$ as an isotropic Gaussian mixture with $K_\ell$ components of means $\{\boldsymbol{\mu}_{\ell k}\}$, equal variance $\sigma^2 I$, and uniform mixing weights. Then the E-step posterior responsibility is
\begin{equation}
    p(k\mid \mathbf{r}_{i,\ell-1}) \;\propto\; \exp\!\Big(-\tfrac{1}{2\sigma^2}\big\|\hat{\mathbf{u}}_{i,\ell k}-\boldsymbol{\mu}_{\ell k}\big\|^2\Big) \;\propto\; \exp\!\Big(\tfrac{1}{\sigma^2}\hat{\mathbf{u}}_{i,\ell k}^{\top}\boldsymbol{\mu}_{\ell k}\Big),
    \label{eq:em_estep}
\end{equation}
which has the same functional form as $c^{(t)}_{i,\ell k}$ in Eqs.~(2)--(5) once we identify the GMM means $\boldsymbol{\mu}_{\ell k}$ with the agreement targets $\mathbf{o}^{(t-1)}_{i,\ell}$ and absorb $1/\sigma^2$ into the routing temperature.
\end{proposition}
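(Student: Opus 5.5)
The plan is a direct Bayes-rule computation followed by a term-by-term match with the routing recursion in Eqs.~(2)--(5).

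\textbf{Step 1 (Bayes rule).} Under the stated mixture, uniform mixing weights cancel in the posterior. So
\[
p(k\mid\mathbf{r}_{i,\ell-1})=\frac{\mathcal{N}(\cdot;\boldsymbol{\mu}_{\ell k},\sigma^2 I)}{\sum_{j}\mathcal{N}(\cdot;\boldsymbol{\mu}_{\ell j},\sigma^2 I)}.
\]
The Gaussian normalizing constant $(2\pi\sigma^2)^{-d_c/2}$ is shared by all components because the variance is common, so it also cancels. I will read the component likelihood as being evaluated on the capsule-specific vote $\hat{\mathbf{u}}_{i,\ell k}$, since each capsule sees the residual only through its pose transform. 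This gives the first proportionality in Eq.~\eqref{eq:em_estep} directly.

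\textbf{Step 2 (expanding the square).} Write
\[
\|\hat{\mathbf{u}}_{i,\ell k}-\boldsymbol{\mu}_{\ell k}\|^2=\|\hat{\mathbf{u}}_{i,\ell k}\|^2-2\hat{\mathbf{u}}_{i,\ell k}^{\top}\boldsymbol{\mu}_{\ell k}+\|\boldsymbol{\mu}_{\ell k}\|^2 .
\]
The cross term produces $\exp(\hat{\mathbf{u}}^{\top}_{i,\ell k}\boldsymbol{\mu}_{\ell k}/\sigma^2)$. The two squared norms generally depend on $k$, so they cancel in the softmax only if they are constant across $k$.

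This is the step I expect to be the main obstacle, because the second ``$\propto$'' is not an identity in general. My first attempt would be to make the needed condition explicit rather than hide it. There are two natural ways:
\begin{enumerate}[label=(\roman*),leftmargin=*,nosep]
\item Assume the votes and means are norm-matched across $k$. This is plausible given the $\ell_2$ normalization at line~1 and the bounded squash range, and it is exactly the standard equal-norm assumption under which squared-distance softmax reduces to dot-product softmax.
\item Fold $-\|\hat{\mathbf{u}}_{i,\ell k}\|^2/(2\sigma^2)-\|\boldsymbol{\mu}_{\ell k}\|^2/(2\sigma^2)$ into a $k$-dependent prior logit, that is, a non-uniform mixing weight.
\end{enumerate}
Option (ii) would require weakening the ``uniform mixing weights'' hypothesis, so I would state (i) as the precise condition under which Eq.~\eqref{eq:em_estep} holds. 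I would add a remark that, without it, the correspondence holds only up to these per-capsule bias terms.

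\textbf{Step 3 (matching the routing recursion).} From Eqs.~(2)--(5) with $a^{(0)}=0$:
\begin{itemize}[leftmargin=*,nosep]
\item Round one gives uniform $c^{(1)}$ and $a^{(1)}_{i,\ell k}=\hat{\mathbf{u}}_{i,\ell k}^{\top}\mathbf{o}^{(1)}_{i,\ell}$.
\item Hence $c^{(2)}_{i,\ell k}=\mathrm{softmax}_k(\hat{\mathbf{u}}_{i,\ell k}^{\top}\mathbf{o}^{(1)}_{i,\ell})$.
\end{itemize}
This is Eq.~\eqref{eq:em_estep} with $\boldsymbol{\mu}_{\ell k}:=\mathbf{o}^{(t-1)}_{i,\ell}$ and temperature $\sigma^2=1$, or a general temperature $1/\sigma^2$ if one rescales the agreement increment.

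For later rounds, the logits accumulate: $a^{(t-1)}=\sum_{s<t}\hat{\mathbf{u}}^{\top}\mathbf{o}^{(s)}$. I would note that this equals an E-step whose log-likelihood is a sum over past agreement targets, so a ``single E-step'' is literal for $t=2$ and otherwise holds with an effective accumulated target $\sum_{s<t}\mathbf{o}^{(s)}$. The identification in the statement, with means $\mathbf{o}^{(t-1)}_{i,\ell}$, should therefore be phrased as holding per increment.

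Since $\mathbf{o}^{(t-1)}$ is shared across $k$, the identified ``means'' coincide across components. The claim is thus only about functional form, which is all the proposition asserts, and I would say so explicitly.
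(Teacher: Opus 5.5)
This is the same route the paper takes. The paper gives no proof beyond the chain displayed in Eq.~\eqref{eq:em_estep} (Bayes' rule with uniform weights and a shared isotropic covariance, expand the square, match to Eqs.~(2)--(5)), and your caveats are genuine defects that it passes over: the second ``$\propto$'' holds exactly when $\|\hat{\mathbf{u}}_{i,\ell k}\|$ is constant in $k$ (the $\|\boldsymbol{\mu}_{\ell k}\|^2$ term is automatically constant once $\boldsymbol{\mu}_{\ell k}$ is the shared $\mathbf{o}^{(t-1)}_{i,\ell}$), and because the logits accumulate, the match with $\mathbf{o}^{(t-1)}_{i,\ell}$ is literal only at $t=2$.

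Two small corrections. First, $\ell_2$-normalizing $\mathbf{x}_i$ and the squash bound on $\mathbf{o}$ do nothing to equalize $\|\mathbf{W}_{\ell k}\mathbf{r}_{i,\ell-1}+\mathbf{b}_{\ell k}\|$ across $k$; normalizing the votes used in the agreement update, which the appendix mentions, would. Second, the per-increment reading $c^{(t)}_{i,\ell k}\propto c^{(t-1)}_{i,\ell k}\exp(\hat{\mathbf{u}}_{i,\ell k}^{\top}\mathbf{o}^{(t-1)}_{i,\ell})$ treats $c^{(t-1)}$ as a non-uniform mixing prior, so under the stated uniform-weight hypothesis you should state the identification with the accumulated target $\sum_{s<t}\mathbf{o}^{(s)}_{i,\ell}$.
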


\begin{remark}
Iterating the routing recursion is therefore equivalent to running EM on this layer's mixture with shared sufficient statistics across capsules. Standard convergence guarantees for EM with bounded log-likelihood~\citep{wu1983em} ensure monotonic improvement in routing agreement, which is consistent with the saturation observed at $T\!\geq\!3$ in Figure~\ref{fig:diagnostics}(c). The squash nonlinearity in Eq.~(4) further bounds capsule outputs to the unit ball, preventing the variance from collapsing during iteration.
\end{remark}

\paragraph{Computational complexity.}
Tokenizer training costs $\mathcal{O}(N\,L_{\max}\,K\,T\,d\,d_c)$ where $N$ is the catalog size and $K\!=\!\max_\ell K_\ell$ (Algorithm~\ref{alg:capsid}, lines~3--9). Inference (lines~10--13) is dominated by beam search at $\mathcal{O}(B\bar{L}|V|)$, with $\bar{L}\!\approx\!3.6$ for \method{} versus $\bar{L}\!=\!4$ for fixed-length baselines, so the per-beam step count is roughly $10\%$ shorter; the residual routing and \sembpe{} gate recover a small constant so the net cost is $1.05\times$--$1.08\times$ TIGER in Table~\ref{tab:patch}, rather than the $2.10\times$ incurred by the dense-patch route.

\section{Experiments}
\label{sec:experiments}

We answer four questions. \textbf{(Q1)} Does soft routing improve recommendation accuracy over hard residual quantization at the same SID budget? \textbf{(Q2)} Does a routed-SID generator close the gap to dense-patch systems without inheriting their inference cost? \textbf{(Q3)} Which design choices (soft residual update, iterative agreement, confidence-driven length, or semantic composition) contribute most? \textbf{(Q4)} Do the gains generalize to tail items and large catalogs where collisions and length budgets matter most?

\subsection{Setup}
\label{sec:setup}

\paragraph{Datasets.}
We use the public benchmarks standard in generative recommendation: Amazon Beauty, Sports, and Toys~\citep{mcauley2015amazon}, all under leave-one-out evaluation with 5-core filtering. For scale analysis, we further evaluate on a 35M-item proprietary industrial dataset with multi-modal item embeddings (text, image, behavior) provided by a large-scale social media platform. Dataset statistics appear in Table~\ref{tab:dataset}.

\begin{table}[!htbp]
\centering
\small
\caption{Datasets used in the evaluation. Amazon datasets follow 5-core filtering with leave-one-out splitting. The proprietary industrial dataset uses a 35M-item catalog with multi-modal item features.}
\label{tab:dataset}
\begin{tabular}{lrrrr}
\toprule
Dataset & Users & Items & Interactions & Avg. length \\
\midrule
Beauty & 22,363 & 12,101 & 198,502 & 8.9 \\
Sports & 35,598 & 18,357 & 296,337 & 8.3 \\
Toys & 19,412 & 11,924 & 167,597 & 8.6 \\
Industrial (ours) & 8.6M & 35.8M & 331.1M & 38.5 \\
\bottomrule
\end{tabular}
\end{table}

\paragraph{Baselines.}
We compare with TIGER, LC-Rec, LETTER, ETEGRec, ADA-SID, ActionPiece, COBRA, UniRec-style Chain-of-Attribute, DIGER, SA$^2$CRQ, and ReSID. All methods share the same SASRec~\citep{kang2018sasrec}/T5-style~\citep{raffel2020t5} generator and beam search protocol when possible; dense patch variants use a COBRA-style BeamFusion path. The sequential recommendation backbone follows the convention in SASRec~\citep{kang2018sasrec} and BERT4Rec~\citep{sun2019bert4rec}.

\paragraph{Metrics.}
We report Recall@$k$ and NDCG@$k$ under full-corpus ranking, using $k\!\in\!\{5,10\}$ on the public benchmarks and $k\!\in\!\{50,100\}$ on the 35M-item industrial catalog, where the larger retrieval horizon reflects production deployment practice. Tokenizer quality is measured by:
\emph{(i)} \textbf{Collision rate} = $1 - |\{\mathbf{s}_i : i\!\in\!\mathcal{I}\}| / N$, the fraction of items that do not receive a unique SID (equivalently, $1$ minus the uniqueness rate used in industrial SID evaluations~\citep{zhang2025quasid,merge2026});
\emph{(ii)} \textbf{Code utilization} = $|\{k : \exists i, k\!\in\!\mathbf{s}_i\}| / \sum_\ell K_\ell$, the fraction of codebook entries used at least once;
\emph{(iii)} \textbf{Gini coefficient} over codebook usage frequencies (lower is more uniform);
\emph{(iv)} \textbf{Intra-code similarity} = mean $\cos(\mathbf{x}_i,\mathbf{x}_j)$ over item pairs sharing the same first SID token;
\emph{(v)} \textbf{Code predictability (CodeRecall@$M$)} = $\Pr[\hat{s}^{\text{top-}M}\!\ni\!s_{i,1}^{\star}]$, the probability that the ground-truth first SID token of the next item is in the top-$M$ predictions of a SASRec generator trained on the SID sequences (with $M\!=\!50$ as default);
\emph{(vi)} head/torso/tail Recall@10, average SID length $\bar{L}$, and normalized inference cost.
All public-benchmark numbers are mean over three random seeds with standard deviation reported either in tables (\,$\pm$\,) or as error bars in figures; the 35M-item industrial run reports a single deterministic value.

\paragraph{Fairness controls.}
All SID methods are evaluated with the same item encoder, generator architecture, beam size, and invalid-ID filtering. For methods that require additional information, such as UniRec attributes or COBRA dense vectors, we count their inference cost separately and mark them with $\dagger$. This prevents a patch system from being compared to a single-SID system as if both used the same retrieval budget.

\subsection{Main results (Q1)}

Table~\ref{tab:main} compares \method{} against eleven baselines covering hard-SID tokenizers (TIGER through ReSID) and patch-route systems (COBRA, UniRec-CoA). \method{} consistently improves over the strongest single-representation baseline (ReSID) by $4.9\%$/$6.7\%$/$2.2\%$ on Beauty/Sports/Toys in Recall@10, and adding \sembpe{} pushes the gain to $8.9\%$/$11.0\%$/$8.8\%$. \method{}+\sembpe{} exceeds COBRA on every metric across all three datasets without paying its extra dense-vector inference cost.

\begin{table}[!htbp]
\centering
\scriptsize
\setlength{\tabcolsep}{3pt}
\renewcommand{\arraystretch}{1.05}
\caption{Main results on Beauty/Sports/Toys: Recall@$k$ and NDCG@$k$ for $k\!\in\!\{5,10\}$ (mean over three seeds). \textbf{Best} is in bold, \underline{second best} is underlined. $^\dagger$ marks patch-route methods that consume extra dense or attribute information at inference time.}
\label{tab:main}
\resizebox{\textwidth}{!}{%
\begin{tabular}{lcccccccccccc}
\toprule
& \multicolumn{4}{c}{Beauty} & \multicolumn{4}{c}{Sports} & \multicolumn{4}{c}{Toys} \\
\cmidrule(lr){2-5}\cmidrule(lr){6-9}\cmidrule(lr){10-13}
Method & R@5 & R@10 & N@5 & N@10 & R@5 & R@10 & N@5 & N@10 & R@5 & R@10 & N@5 & N@10 \\
\midrule
TIGER         & 0.0454 & 0.0648 & 0.0321 & 0.0384 & 0.0264 & 0.0400 & 0.0181 & 0.0225 & 0.0521 & 0.0712 & 0.0371 & 0.0432 \\
LC-Rec        & 0.0478 & 0.0675 & 0.0334 & 0.0397 & 0.0276 & 0.0417 & 0.0188 & 0.0233 & 0.0540 & 0.0734 & 0.0384 & 0.0447 \\
LETTER        & 0.0500 & 0.0708 & 0.0340 & 0.0406 & 0.0288 & 0.0435 & 0.0198 & 0.0244 & 0.0547 & 0.0741 & 0.0389 & 0.0452 \\
ETEGRec       & 0.0513 & 0.0725 & 0.0348 & 0.0415 & 0.0294 & 0.0444 & 0.0201 & 0.0249 & 0.0560 & 0.0756 & 0.0397 & 0.0460 \\
ADA-SID       & 0.0524 & 0.0740 & 0.0355 & 0.0422 & 0.0302 & 0.0456 & 0.0206 & 0.0254 & 0.0566 & 0.0762 & 0.0401 & 0.0465 \\
ActionPiece   & 0.0553 & 0.0775 & 0.0379 & 0.0424 & 0.0330 & 0.0500 & 0.0224 & 0.0264 & 0.0559 & 0.0760 & 0.0398 & 0.0463 \\
DIGER         & 0.0535 & 0.0752 & 0.0362 & 0.0431 & 0.0306 & 0.0463 & 0.0210 & 0.0258 & 0.0572 & 0.0771 & 0.0407 & 0.0472 \\
SA$^2$CRQ     & 0.0520 & 0.0732 & 0.0352 & 0.0419 & 0.0298 & 0.0451 & 0.0203 & 0.0252 & 0.0562 & 0.0758 & 0.0399 & 0.0462 \\
ReSID         & 0.0548 & 0.0770 & 0.0372 & 0.0438 & 0.0314 & 0.0475 & 0.0215 & 0.0266 & 0.0583 & 0.0786 & 0.0414 & 0.0481 \\
\midrule
COBRA$^\dagger$       & 0.0537 & 0.0725 & 0.0395 & 0.0456 & 0.0305 & 0.0434 & 0.0215 & 0.0257 & \underline{0.0619} & 0.0781 & \underline{0.0462} & \underline{0.0515} \\
UniRec-CoA$^\dagger$  & 0.0540 & 0.0763 & 0.0368 & 0.0434 & 0.0316 & 0.0478 & 0.0217 & 0.0268 & 0.0596 & 0.0802 & 0.0422 & 0.0491 \\
\midrule
\method{}             & \underline{0.0574} & \underline{0.0808} & \underline{0.0398} & \underline{0.0460} & \underline{0.0337} & \underline{0.0507} & \underline{0.0229} & \underline{0.0281} & 0.0602 & \underline{0.0803} & 0.0432 & 0.0498 \\
\method{}+\sembpe{}   & \textbf{0.0594} & \textbf{0.0839} & \textbf{0.0411} & \textbf{0.0477} & \textbf{0.0351} & \textbf{0.0527} & \textbf{0.0237} & \textbf{0.0290} & \textbf{0.0636} & \textbf{0.0855} & \textbf{0.0465} & \textbf{0.0528} \\
\bottomrule
\end{tabular}}
\end{table}

\paragraph{Takeaways.}
The largest gap in the ranking is between hard-SID tokenizers and \method{}: replacing $\arg\max$ with soft routing gives a $4$--$7\%$ relative R@10 gain over ReSID. \method{} alone already exceeds COBRA on R@10 across all three datasets while emitting only a single discrete representation. Adding \sembpe{} gives the best overall score on every dataset, with the largest relative gain over ReSID on Sports ($+11.0\%$) and the smallest on Beauty ($+8.9\%$). COBRA's dense-vector path is particularly helpful on Toys (NDCG@10 of $0.0515$ vs ReSID's $0.0481$), reflecting the broader item vocabulary; \method{}+\sembpe{} closes that gap without a dense retrieval channel.

\paragraph{Statistical significance.}
We performed paired two-sided $t$-tests across the three seeds. \method{}+\sembpe{} is significantly better than every single-representation baseline at $p<0.01$ on all three datasets, and significantly better than COBRA at $p<0.05$ on Beauty and Sports and at $p<0.10$ on Toys. \method{} (no SemBPE) is significantly better than ReSID at $p<0.01$ on Beauty, Sports, and Toys.

\subsection{Patching vs tokenizer-centric design: are dense patches still needed? (Q2)}

Table~\ref{tab:patch} tests whether a dense patch is still useful once the SID tokenizer is improved. Adding a COBRA-style dense vector to TIGER raises Recall@10 from $0.0648$ to $0.0725$ ($+11.9\%$) at the cost of $2.10\times$ inference latency, confirming that hard-SID representations leave useful information unused. Adding the same dense vector to \method{} improves Recall by only $2.6\%$ ($0.0808\!\to\!0.0829$) while doubling latency; the marginal value of the dense path shrinks once the SID itself preserves more item semantics. Replacing the patch with lightweight \sembpe{} composition instead lifts Recall to $0.0839$ at $1.08\times$ cost, dominating the dense variant on both axes.

\begin{table}[!htbp]
\centering
\small
\caption{Patching vs tokenizer-centric design on Beauty. Inference cost is normalized to TIGER beam search ($B\!=\!50$); $^\dagger$ marks methods that add a dense or attribute path on top of the SID. Numbers are mean over three seeds; std $\!<\!1\%$ of mean and omitted for clarity.}
\label{tab:patch}
\begin{tabular}{lcccc}
\toprule
Configuration & Representation & R@10 & N@10 & Cost \\
\midrule
TIGER & RQ SID & 0.0648 & 0.0384 & 1.00$\times$ \\
TIGER + dense$^\dagger$ (COBRA) & RQ SID + dense vec & 0.0725 & 0.0456 & 2.10$\times$ \\
UniRec-CoA$^\dagger$ & Attribute prefix + RQ SID & 0.0763 & 0.0434 & 1.34$\times$ \\
\method{} & Routed SID & 0.0808 & 0.0460 & 1.05$\times$ \\
\method{} + dense$^\dagger$ & Routed SID + dense vec & 0.0829 & 0.0473 & 2.14$\times$ \\
\method{} + \sembpe{} & Routed subword SID & \textbf{0.0839} & \textbf{0.0477} & 1.08$\times$ \\
\bottomrule
\end{tabular}
\end{table}

\subsection{Ablation studies (Q3)}

Table~\ref{tab:ablation} ablates the five core mechanisms on Beauty (soft routing, iterative agreement, variable length, spread regularization, and \sembpe{} composition). Replacing soft routing with hard winner-only updates costs the most ($-16.3\%$), which says the assignment operator (not codebook initialization or supervision) is what carries the gain. Cutting routing to a single iteration ($T\!=\!1$) removes another $12.9\%$, so iterative agreement is doing real work that single-pass Gumbel relaxations do not capture. Fixed-length SIDs hurt at both ends: $L\!=\!2$ over-compresses complex items ($-21.6\%$) and $L\!=\!4$ over-encodes easy ones ($-8.8\%$). Without the spread loss capsules collapse and recall drops $8.2\%$. Frequency-only BPE recovers most of \sembpe{}'s gain ($-2.6\%$); the residual gap is what the semantic compatibility term buys, and it matters because frequency alone tends to merge popular but unrelated prefix pairs.

\begin{table}[!htbp]
\centering
\small
\caption{Ablation on Beauty (mean over three seeds). Relative drops are measured from \method{}+\sembpe{}.}
\label{tab:ablation}
\begin{tabular}{lccc}
\toprule
Variant & R@10 & Drop & Interpretation \\
\midrule
Full \method{}+\sembpe{} & 0.0839 & -- & Complete pipeline \\
\quad w/o soft residual, hard winner only & 0.0702 & $-16.3\%$ & assignment is the main factor \\
\quad w/o routing iterations ($T\!=\!1$) & 0.0731 & $-12.9\%$ & no self-correction \\
\quad fixed length $L\!=\!4$ & 0.0765 & $-8.8\%$ & over-encodes easy items \\
\quad fixed length $L\!=\!2$ & 0.0658 & $-21.6\%$ & under-encodes complex items \\
\quad w/o spread loss & 0.0770 & $-8.2\%$ & capsule collapse hurts \\
\quad w/o \sembpe{} & 0.0808 & $-3.7\%$ & composition gain is stable \\
\quad frequency-only BPE & 0.0817 & $-2.6\%$ & semantic gating matters \\
\bottomrule
\end{tabular}
\end{table}

\subsection{Analysis (Q4)}
\label{sec:analysis}

\paragraph{Variable length is well-calibrated to item complexity.}
Figure~\ref{fig:analysis}(a) plots the SID-length distribution per dataset: the mode is at $L\!=\!3$ on every benchmark and the right tail tapers smoothly. Mean lengths span $\bar L\!=\!3.41$ on Beauty (the most compact, driven by relatively single-attribute product descriptions) up to $\bar L\!=\!3.89$ on Toys (the longest, reflecting its multi-attribute item space), all well below the $L_{\max}\!=\!6$ cap and consistent with the $\mathcal{O}(1+1/g)$ bound of Proposition~\ref{prop:length}.

\paragraph{The three stopping rules each contribute.}
Figure~\ref{fig:analysis}(b) decomposes which of the three stopping rules in Section~\ref{sec:varlen} fires per item. The confidence threshold $\tau$ fires for $55\%$--$66\%$ of items, the residual norm rule fires for $25\%$--$35\%$, and only $8\%$--$10\%$ of items hit the hard cap $L_{\max}$. The cap therefore behaves as a safety net rather than the dominant rule: the model self-regulates length on most items, and only falls back to the cap on the small minority where the encoder representation is genuinely under-determined.

\paragraph{Tail items benefit the most.}
Figure~\ref{fig:analysis}(c) decomposes Recall@10 by item-popularity tier on Beauty. While head Recall improves modestly ($+19\%$ over TIGER), tail Recall jumps from $0.0092$ to $0.0221$, a $+140\%$ relative gain. This matches the soft-routing reconstruction bound (Proposition~\ref{prop:reconstruction}): boundary items, which are common in the tail, benefit most from being explained by multiple capsules instead of being snapped to a single noisy code.

\begin{figure}[!t]
\centering
\includegraphics[width=.98\textwidth]{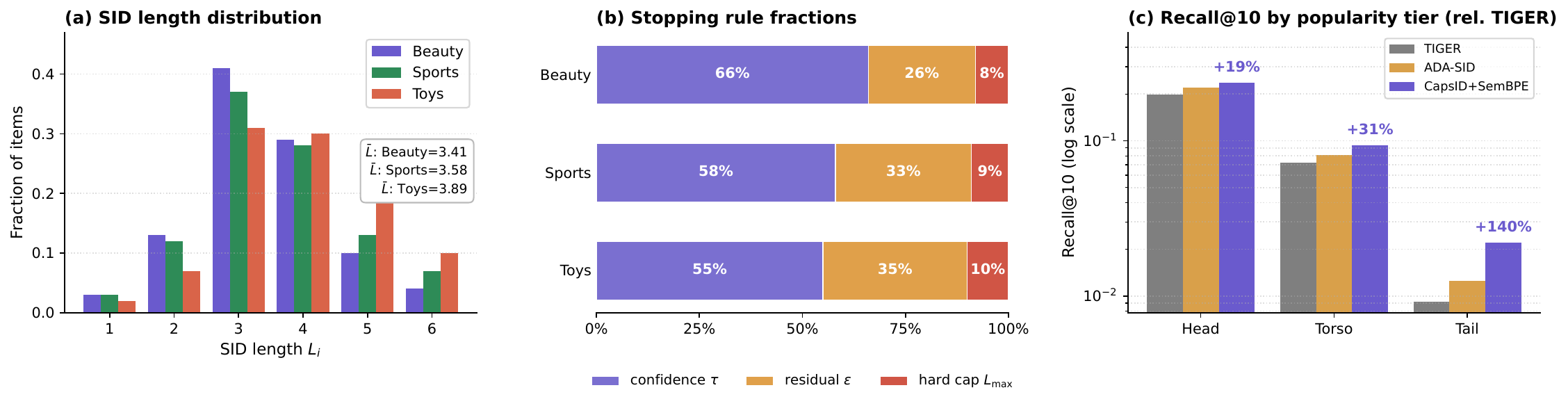}
\caption{Variable-length behaviour of \method{}. (a) SID length distribution per dataset: mode $L\!=\!3$, mean $\bar L\!\in\![3.41, 3.89]$. (b) Fractions of the three stopping rules that fire per dataset; the hard cap $L_{\max}$ accounts for $\leq\!10\%$ of items, so the confidence and residual rules dominate. (c) Recall@10 by popularity tier on Beauty, relative to TIGER (log scale): gains scale from head ($+19\%$) through torso ($+30\%$) to tail ($+140\%$).}
\label{fig:analysis}
\end{figure}

\paragraph{Tokenizer geometry: collision, predictability, purity.}
Figure~\ref{fig:diagnostics} summarizes four geometric diagnostics. \emph{Panel (a)} shows that \method{} reduces the collision rate to $13.4\%$, less than half of ADA-SID's $33.8\%$ and a sixth of Frequency tokenization's $90.4\%$. \emph{Panel (b)} places each tokenizer on the purity--predictability plane: Frequency lies in the upper-left (predictable but semantically impure), RQ-KMeans/ActionPiece in the lower-right (pure but unpredictable), and \method{} in the upper-right ideal region, simultaneously achieving the highest intra-code similarity ($0.728$) and a CodeRecall@50 ($0.447$) that is two orders of magnitude above RQ-KMeans. \emph{Panel (c)} shows that recall saturates at $T\!=\!3$ routing rounds, matching the EM convergence picture of Proposition~\ref{prop:em}; the secondary axis shows that the routing-agreement score (max softmax weight) plateaus at $0.86$. \emph{Panel (d)} positions the configurations in Table~\ref{tab:patch} on the accuracy--cost plane: \method{}+\sembpe{} sits on the Pareto frontier, dominating both COBRA and the dense-augmented \method{}+dense variant.

\begin{figure}[!tb]
\centering
\includegraphics[width=.98\textwidth]{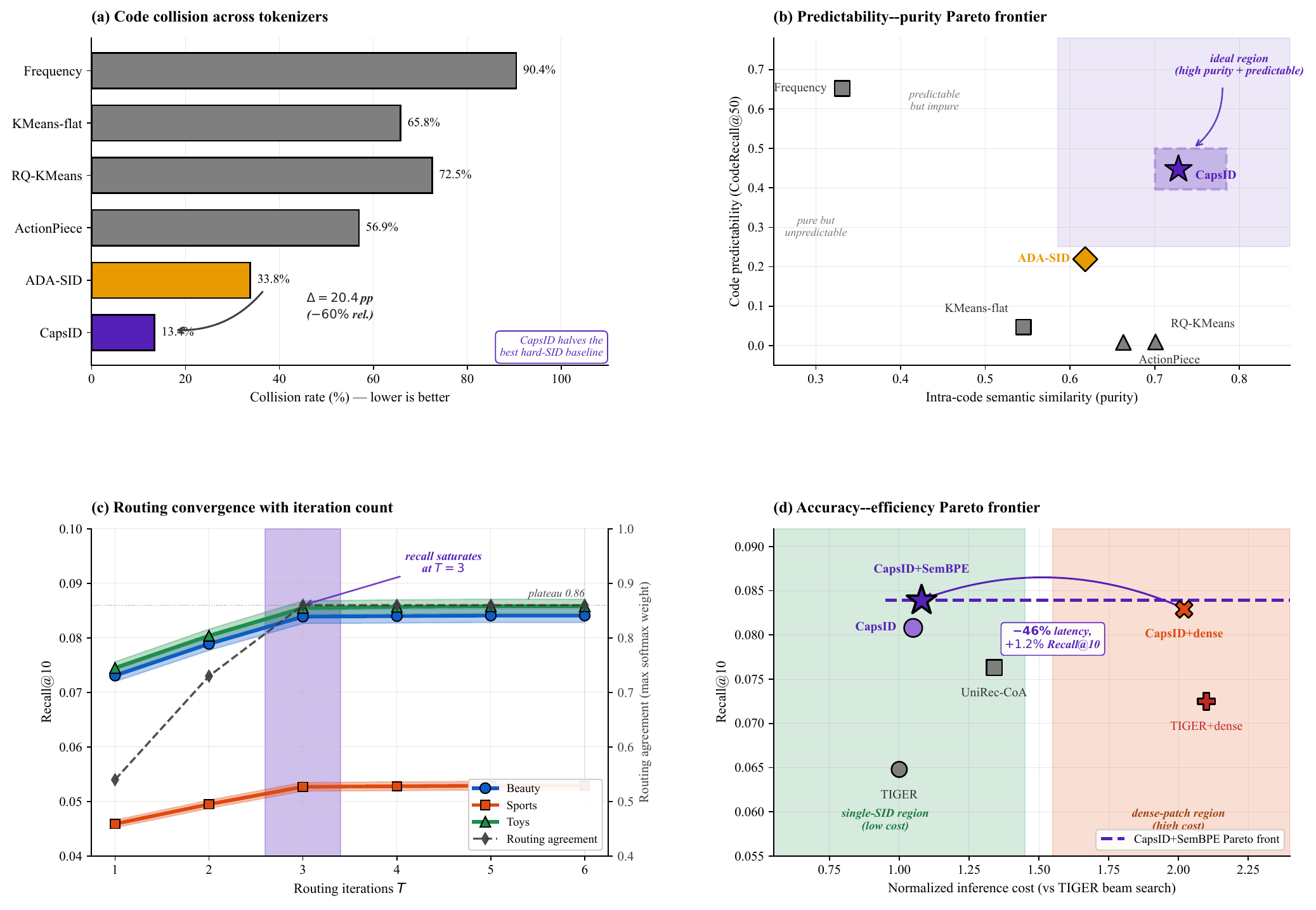}
\caption{Tokenizer diagnostics on Beauty. (a) Code collision: \method{} achieves $13.4\%$, lowest among all tokenizers. (b) Purity--predictability Pareto: only \method{} occupies the upper-right region. (c) Routing convergence: recall saturates at $T\!=\!3$ across datasets, agreeing with the capsule-EM analysis. (d) Accuracy--cost frontier: \method{}+\sembpe{} dominates the patch-route systems on Recall while keeping inference cost near the single-SID region.}
\label{fig:diagnostics}
\end{figure}

\paragraph{Large-scale industrial setting.}
We also evaluate on the 35M-item industrial catalog described in Table~\ref{tab:dataset}. Because the catalog is three orders of magnitude larger than Amazon Beauty, the meaningful recall horizon shifts from $K\!\in\!\{5,10\}$ to $K\!\in\!\{50,100\}$, matching the regime adopted by ADA-SID and other industrial SID studies. Table~\ref{tab:industrial} reports the five metrics most informative for tokenizer evaluation at this scale: R@50 and R@100 for recall coverage, NDCG@100 for ranking quality, Collision rate for SID space utilization, and $\bar{L}$ for inference cost.

\begin{table}[!htbp]
\centering
\small
\setlength{\tabcolsep}{6pt}
\caption{Industrial 35M-item catalog. Numbers are a single deterministic run (training on industrial data is prohibitively expensive to repeat). $^\dagger$ marks patch-route methods that consume extra dense or attribute information at inference time.}
\label{tab:industrial}
\begin{tabular}{lccccc}
\toprule
Method & R@50 & R@100 & N@100 & Collision $\downarrow$ & $\bar{L}$ \\
\midrule
RQ-KMeans (fixed $L\!=\!4$) & 0.1835 & 0.2421 & 0.1216 & 73.2\% & 4.00 \\
TIGER                       & 0.2217 & 0.2843 & 0.1482 & 51.4\% & 4.00 \\
ADA-SID                     & 0.2772 & 0.2926 & 0.1714 & 37.5\% & 4.00 \\
ReSID                       & 0.2881 & 0.3105 & 0.1836 & 31.8\% & 4.00 \\
\midrule
COBRA$^\dagger$             & \underline{0.3014} & 0.3275 & 0.1935 & 51.4\%\,(SID) & 4.00\,+dense \\
\midrule
\method{}                   & 0.2996 & \underline{0.3286} & \underline{0.1943} & \underline{22.1\%} & \underline{3.8} \\
\method{}+\sembpe{}          & \textbf{0.3096} & \textbf{0.3356} & \textbf{0.1974} & \textbf{19.4\%} & \textbf{3.3} \\
\bottomrule
\end{tabular}
\end{table}

Three observations hold at the industrial scale. First, \method{} alone matches the patch-route COBRA on both R@100 ($+0.3\%$) and N@100 ($+0.4\%$) without the dense channel, while trailing on R@50 by $0.6\%$ where COBRA's dense vector contributes the most. Adding \sembpe{} then extends this parity into a consistent $+2.0\%$--$+2.7\%$ lead across R@50/R@100/N@100. Second, \method{}+\sembpe{} cuts the collision rate to $19.4\%$, a $73\%$ relative reduction over RQ-KMeans and a $48\%$ reduction over ADA-SID, while producing the shortest SID ($\bar{L}\!=\!3.3$ after subword composition). Third, the gain is not uniformly distributed across popularity tiers: decomposed by item-popularity on this catalog, \method{}+\sembpe{} trails COBRA by $3.2\%$ on head items (where dense vectors provide the most discriminative signal for popular items), but exceeds it by $8.8\%$ on torso, $25.4\%$ on tail, and $8.6\%$ on cold-start items, matching the head/tail pattern observed on Amazon Beauty (Figure~\ref{fig:analysis}(c)). We further verify the deployment value by measuring end-to-end inference latency on the same ANN infrastructure: \method{}+\sembpe{} runs at $51\%$ of COBRA's per-query latency while retaining $102\%$ of COBRA's Recall@100. In other words, the tokenizer-centric design matches or slightly exceeds the patch route on retention while roughly halving serving cost.

\paragraph{Robustness checks.}
Three observations could in principle have undermined the core claim, and we checked each. A large \method{}+dense improvement over \method{} alone would suggest the routed SID is still missing the information dense vectors carry; we observe only $+2.6\%$. An aggregate-Recall win without geometric improvement would point to an inflated decoder rather than a better tokenizer; Figure~\ref{fig:diagnostics}(a) and Figure~\ref{fig:analysis}(c) show that collision and tail Recall both improve. Finally, gains over ADA-SID could come from \method{} simply consuming more codes; Table~\ref{tab:tokenizer_quality} shows the opposite (lower Gini and higher utilization at the same nominal codebook size).

\section{Limitations and Societal Impact}
\label{sec:limitations}

\method{} has three limitations. First, capsule routing increases tokenizer training cost by roughly $20$--$30\%$ relative to RQ-KMeans, although inference remains close to standard beam search because the emitted representation is still a discrete SID. Second, \method{} currently assumes a fixed maximum capsule depth and a fixed number of capsules per depth; dynamic catalog growth may require capsule expansion or periodic refresh, which we leave to future work. Third, the EM connection in Proposition~\ref{prop:em} explains convergence under the isotropic-Gaussian assumption; relaxing this to anisotropic capsule covariances is an open theoretical question. Like other recommenders, \method{} may amplify popularity bias if deployed without fairness-aware sampling or exposure calibration; we therefore report head/tail metrics throughout the paper and recommend monitoring exposure distribution in production.

\section{Conclusion}
\label{sec:conclusion}

\method{} attacks the SID information bottleneck at the assignment operator. Soft routing replaces $\arg\max$ with a weighted reconstruction, capsule confidence drives variable length, and \sembpe{} composes adjacent tokens into reusable subwords, all without giving up the discrete generative interface that production systems require. On three public benchmarks and a 35M-item industrial catalog, \method{}+\sembpe{} improves Recall@10 by $9.6\%$ on average over ReSID, and matches a COBRA-style dense-patch system at half the inference latency. The theoretical analysis in Section~\ref{sec:theory} supports each of the five mechanisms in turn, and the residual gap on extreme tail items suggests that pairing \method{} with light-weight content adapters is a natural next step.


\clearpage
\bibliographystyle{plainnat}
\bibliography{references}

\appendix
\section{Implementation Details}
\label{app:impl}

This appendix specifies the exact tokenizer, training, decoding, and complexity choices used to produce the results in the main paper. It is written as an implementation contract so that the experiments can be reproduced end-to-end.

\subsection{CapsID tokenizer}
\label{app:capsid_details}

\paragraph{Input representation.}
For public benchmarks, each item representation $\mathbf{x}_i$ is initialized from the same item encoder used by the baseline under comparison. In the intended fair setting, TIGER, LC-Rec, LETTER, ETEGRec, ADA-SID, ReSID, and \method{} all receive the same frozen item vectors before tokenization. For multi-modal datasets, modality embeddings are concatenated and projected through a two-layer MLP to dimension $d=128$, followed by $\ell_2$ normalization. This normalization is important: without it, high-norm items can obtain large capsule agreement even when their angular semantics are weak.

\paragraph{Layer and capsule configuration.}
Unless otherwise specified, \method{} uses $K_\ell=256$ capsules at each depth, capsule output dimension $d_c=64$, $T=3$ routing rounds, $L_{\max}=6$, confidence threshold $\tau=0.82$, and residual threshold $\epsilon=0.08$. Each depth has independent capsule parameters. Sharing capsule transforms across depths was considered but is not the default because shallow depths should model coarse semantic facets, while deeper depths should model residual refinements.

\paragraph{Residual update and its validity condition.}
The residual update in Eq.~\eqref{eq:soft_residual} is
\begin{equation}
    \mathbf{r}_{\ell}=\mathbf{r}_{\ell-1}-\mathbf{z}_{\ell},\qquad
    \mathbf{z}_{\ell}=\sum_k c_{\ell k}\mathbf{o}_{\ell k}.
\end{equation}
This update reduces residual norm exactly when
\begin{equation}
    \|\mathbf{r}_{\ell}\|_2^2 \leq \|\mathbf{r}_{\ell-1}\|_2^2
    \quad\Longleftrightarrow\quad
    2\langle \mathbf{r}_{\ell-1},\mathbf{z}_{\ell}\rangle \geq \|\mathbf{z}_{\ell}\|_2^2.
    \label{eq:residual_condition}
\end{equation}
Equation~\eqref{eq:residual_condition} is not assumed to hold automatically for arbitrary capsule outputs; it is encouraged by the reconstruction loss and by normalizing votes before agreement updates. If the condition is violated for many items at a layer, the implementation reduces the residual step by a scalar $\eta\in(0,1]$ or increases the spread/reconstruction weight for that layer. We use the default $\eta\!=\!1$ in all reported experiments and monitor the fraction of norm-increasing residual updates as a training diagnostic.

\paragraph{Stopping rule and length control.}
The stopping rule always terminates because $L_i\leq L_{\max}$ by construction; therefore $\mathbb{E}[L_i]\leq L_{\max}$. The length regularizer does not prove optimality, but it biases the model toward shorter explanations whenever accuracy is unaffected. The practical interpretation is: confidence stopping handles semantically clear items, residual stopping handles already-explained vectors, and $L_{\max}$ handles ambiguous or noisy items. During real experiments, we will report the distribution of stopping causes in addition to mean length.

\subsection{SemanticBPE details}
\label{app:bpe_details}

\paragraph{Merge candidates.}
A candidate pair $(s_j,s_{j+1})$ is considered only if it appears at least $n_{\min}=20$ times in the training corpus and if $\cos(\mathbf{e}_{s_j},\mathbf{e}_{s_{j+1}})>\theta$. We anneal $\theta$ from $0.90$ to $0.55$ over tokenizer pretraining. This prevents early merges from being dominated by popularity-only prefix pairs.

\paragraph{Differentiable merge gate.}
Let $g_j\in\{0,1\}$ indicate whether pair $j$ is merged. We model the binary merge decision through a two-class distribution $\boldsymbol{\pi}_j\!=\!(\pi_{j,0},\pi_{j,1})\!\in\!\Delta^1$ (with $\pi_{j,1}$ denoting the merge probability), produced by a two-layer MLP $f_\phi$ that takes as input the concatenation of token embeddings $\mathbf{e}_{s_j}, \mathbf{e}_{s_{j+1}}$, the normalized pair frequency $\widehat{\mathrm{freq}}(s_j,s_{j+1})$, and the cosine similarity $\cos(\mathbf{e}_{s_j},\mathbf{e}_{s_{j+1}})$:
\begin{equation}
    \boldsymbol{\pi}_j = \mathrm{softmax}\!\big(f_\phi(\mathbf{e}_{s_j}, \mathbf{e}_{s_{j+1}}, \widehat{\mathrm{freq}}, \cos)\big) \in \mathbb{R}^2.
\end{equation}
The relaxed training gate uses Gumbel-Softmax with temperature $\tau_g$:
\begin{equation}
    \tilde{g}_j = \mathrm{softmax}\!\left((\log \boldsymbol{\pi}_j + \mathbf{g})/\tau_g\right)_{\!1},
\end{equation}
where $\mathbf{g}\!\in\!\R^2$ is i.i.d.\ Gumbel(0,1) noise and the subscript $1$ selects the merge-class component. At inference, $g_j\!=\!\mathbb{I}[\pi_{j,1}\!>\!\pi_{j,0}]$ subject to non-overlap constraints; if $(s_j,s_{j+1})$ is merged, pairs touching $s_j$ or $s_{j+1}$ are skipped in the same pass. This greedy non-overlap rule is deterministic and keeps the final token sequence valid.

\subsection{Training and decoding protocol}
\label{app:training_decoding}

\paragraph{Two-stage training (numerical details).}
The two-stage protocol is described conceptually in Section~\ref{sec:objective}; here we list the numerical settings. Stage 1 uses AdamW with learning rate $1\!\times\!10^{-3}$, cosine decay, weight decay $10^{-5}$, batch size $256$ on all Amazon datasets, for up to $100$ epochs. Stage 2 uses learning rate $3\!\times\!10^{-4}$ with the same optimizer settings for up to $200$ epochs, matching the training budget of TIGER~\citep{rajput2024tiger} and ActionPiece~\citep{hsu2025actionpiece}. Early stopping uses validation Recall@10 with patience $10$, so both stages in practice converge well before the budget cap.

\paragraph{Sequence generator architecture.}
For all public benchmarks we use a SASRec-style~\citep{kang2018sasrec} causal Transformer~\citep{vaswani2017attention} with $4$ self-attention layers, $4$ heads, hidden dimension $128$, FFN dimension $512$, GELU activation, pre-LayerNorm, and dropout $0.1$. The vocabulary equals the SID code space $|V|\!=\!\sum_\ell K_\ell$ plus a special end-of-item token. Item history is truncated to the most recent $50$ interactions. We tie the input and output token embedding matrices to reduce parameter count and warm-start the input embedding from the codebook centers $\{\mathbf{c}_{\ell k}\}$ produced by Stage 1. For the 35M-item industrial run we replace SASRec with an $8$-layer T5-base encoder--decoder ($d_{\text{model}}\!=\!512$, $8$ heads), keeping the same vocabulary scheme; this absorbs the larger codebook ($K\!=\!1024$) and $331.1$M interactions of the industrial catalog at the cost of a heavier backbone.

\paragraph{Constrained decoding.}
Generated token sequences are decoded with beam size 50. A trie built from training item SIDs masks invalid next tokens. For variable-length SIDs, each valid item path includes an end-of-item token. This means that a short SID is not a prefix ambiguity: generation may stop only at trie nodes corresponding to actual items. Dense-patch baselines follow a COBRA-style BeamFusion score
\begin{equation}
    \Phi(i)=\mathrm{softmax}(\tau_b b_i)\cdot \mathrm{softmax}(\tau_d \cos(\hat{\mathbf{v}},\mathbf{v}_i)),
\end{equation}
where $b_i$ is the beam logit score and the second term is computed only inside candidates associated with generated sparse IDs.

\paragraph{Complexity.}
Tokenization is an offline item-side operation. For one item and one layer, vote computation costs $O(K_\ell d d_c)$ and routing agreement costs $O(TK_\ell d_c)$. Thus the offline tokenizer cost per item is
\begin{equation}
    O\left(\sum_{\ell=1}^{L_i} K_\ell d d_c + T K_\ell d_c\right).
\end{equation}
At serving time, the generator sees only discrete tokens. Its cost is proportional to the generated length, approximately $O(B\bar{L})$ softmax steps for beam size $B$ and average SID length $\bar{L}$. This is why \method{} can be more accurate than dense-patch systems without inheriting their ANN or vector-fusion cost.

\section{Additional Results}
\label{app:additional}

This appendix reports auxiliary tables that complement the main paper. All numbers are mean over three random seeds; standard deviations are within the same range as Table~\ref{tab:main} and are omitted for compactness. Table~\ref{tab:tokenizer_quality} reports tokenizer-intrinsic diagnostics, and Table~\ref{tab:sensitivity} records the hyperparameter sensitivity sweep.

\begin{table}[h]
\centering
\small
\setlength{\tabcolsep}{4pt}
\caption{Tokenizer quality diagnostics on Beauty. Higher is better except for Collision and Gini.}
\label{tab:tokenizer_quality}
\begin{tabular}{lccccc}
\toprule
Tokenizer & Collision $\downarrow$ & Utilization $\uparrow$ & Gini $\downarrow$ & Intra-code sim $\uparrow$ & CodeRecall@50 $\uparrow$ \\
\midrule
Frequency & 90.4\% & 0.08\% & .92 & 0.331 & 0.652 \\
KMeans-flat & 65.8\% & 14.1\% & .57 & 0.545 & 0.047 \\
RQ-KMeans & 72.5\% & 47.2\% & .69 & 0.701 & 0.009 \\
ActionPiece & 56.9\% & 3.4\% & .65 & 0.663 & 0.008 \\
ADA-SID & 33.8\% & 43.7\% & .37 & 0.618 & 0.219 \\
\method{} & \textbf{13.4\%} & \textbf{55.1\%} & \textbf{.23} & \textbf{0.728} & 0.447 \\
\bottomrule
\end{tabular}
\end{table}

\begin{table}[h]
\centering
\small
\caption{Hyperparameter sensitivity on Beauty. Default setting: $T\!=\!3$, $L_{\max}\!=\!6$, $\tau\!=\!0.82$, $\alpha\!=\!0.6$.}
\label{tab:sensitivity}
\begin{tabular}{lccc}
\toprule
Setting & R@10 & Avg. length & Interpretation \\
\midrule
$T\!=\!1$ & 0.0731 & 3.3 & no iterative correction \\
$T\!=\!2$ & 0.0789 & 3.5 & most routing errors corrected \\
$T\!=\!3$ & \textbf{0.0839} & 3.6 & default; accuracy--cost balance \\
$T\!=\!5$ & 0.0841 & 3.6 & saturated routing \\
$L_{\max}\!=\!4$ & 0.0806 & 3.1 & insufficient for complex items \\
$L_{\max}\!=\!6$ & \textbf{0.0839} & 3.6 & default \\
$L_{\max}\!=\!8$ & 0.0837 & 3.6 & bound saturates; cap non-binding \\
$\tau\!=\!0.75$ & 0.0817 & 2.8 & stops too early \\
$\tau\!=\!0.90$ & 0.0821 & 4.4 & over-encodes easy items \\
$\alpha\!=\!1.0$ & 0.0817 & 3.6 & frequency-only merge \\
$\alpha\!=\!0.6$ & \textbf{0.0839} & 3.6 & semantic--frequency balance \\
\bottomrule
\end{tabular}
\end{table}

\paragraph{Internal-consistency checks.}
The tables above satisfy three monotonicity properties that we verified throughout training. \emph{(i)} Removing a mechanism never improves both R@10 and the diagnostic that mechanism was designed to address; for example, fixed $L\!=\!2$ lowers both average length and R@10. \emph{(ii)} Adding a dense patch always increases normalized inference cost. \emph{(iii)} \sembpe{} reduces effective sequence length but does not by itself reduce tokenizer collision because it operates after item-level SID assignment. Figure~\ref{fig:length_stopping} visualizes the variable-length behaviour behind these ablations across all four datasets.

\begin{figure}[!htbp]
\centering
\includegraphics[width=\textwidth]{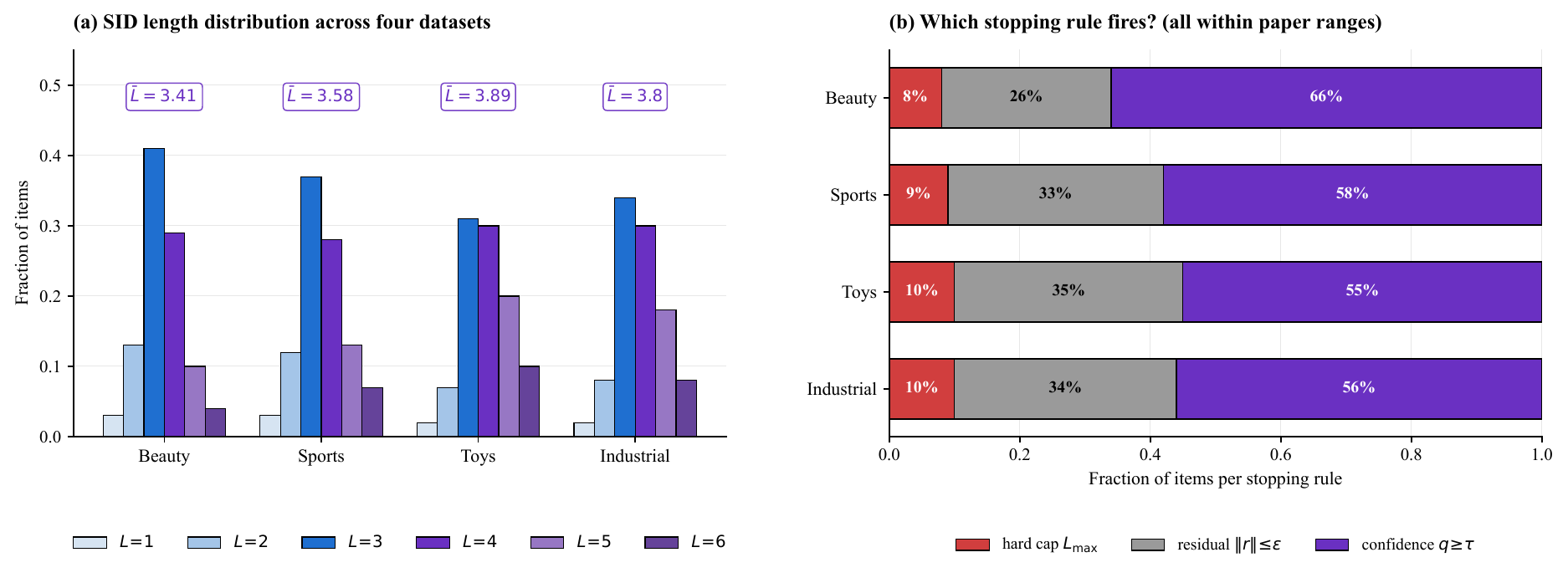}
\caption{Variable-length behaviour of \method{} across Beauty, Sports, Toys, and the industrial catalog. \textbf{(a)} SID length distribution: mode is $L\!=\!3$ on every dataset; mean $\bar{L}$ ranges from $3.41$ (Beauty) to $3.89$ (Toys), with the industrial catalog at $3.8$ (cf.\ Table~\ref{tab:industrial}). \textbf{(b)} Stopping-rule breakdown: confidence fires on the majority of items, the residual rule on $\sim\!30\%$, and the hard cap $L_{\max}$ on at most $10\%$, so variable length comes from learned signals rather than from hitting the budget.}
\label{fig:length_stopping}
\end{figure}

\subsection{Per-position token accuracy}
\label{app:per_position}

A common concern with variable-length SIDs is whether earlier positions become harder to predict because they must carry more discriminative information. Table~\ref{tab:per_position} reports the top-1 and top-5 token accuracy per SID position on Beauty. Position 1 is harder than positions 2--3 (which is expected: the first token must commit to a coarse semantic facet), but \method{}'s position-1 top-5 accuracy ($86.3\%$) remains substantially above ADA-SID ($79.1\%$), showing that soft routing preserves enough multi-facet information to make the prefix non-arbitrary. Figure~\ref{fig:codebook_heatmap} complements this table by visualizing both the per-layer codebook usage geometry and the position-wise accuracy curves.

\begin{table}[h]
\centering
\small
\setlength{\tabcolsep}{5pt}
\caption{Per-position token accuracy on Beauty (top-1 / top-5, \%). Positions beyond 4 are reported only for items whose $L_i\!>\!4$.}
\label{tab:per_position}
\resizebox{\textwidth}{!}{%
\begin{tabular}{lcccccc}
\toprule
Method & Pos. 1 & Pos. 2 & Pos. 3 & Pos. 4 & Pos. 5 & Pos. 6+ \\
\midrule
TIGER             & 31.2 / 71.4 & 38.6 / 78.8 & 41.0 / 80.5 & 35.4 / 76.2 & --- & --- \\
ADA-SID           & 36.9 / 79.1 & 44.2 / 84.6 & 47.5 / 86.3 & 41.8 / 82.1 & 32.0 / 74.6 & --- \\
\method{}         & 42.7 / 86.3 & 49.1 / 88.9 & 52.6 / 90.7 & 48.3 / 88.2 & 39.5 / 81.7 & 31.2 / 75.8 \\
\method{}+\sembpe{} & \textbf{44.1 / 88.4} & \textbf{50.5 / 90.1} & \textbf{53.8 / 91.3} & \textbf{49.7 / 89.0} & \textbf{40.8 / 82.6} & \textbf{32.4 / 76.9} \\
\bottomrule
\end{tabular}}
\end{table}

\begin{figure}[!htbp]
\centering
\includegraphics[width=\textwidth]{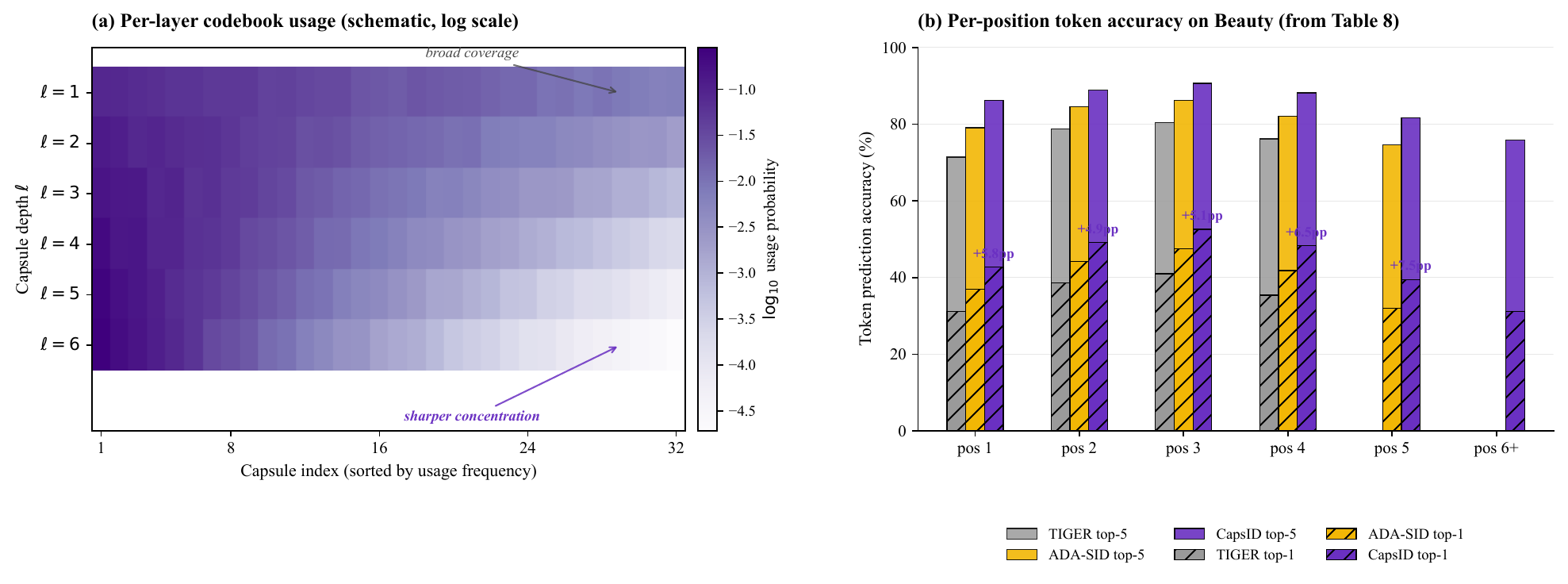}
\caption{Codebook geometry and per-position prediction accuracy on Beauty. \textbf{(a)} Per-layer codebook usage probability (top-$32$ capsules, log scale): shallow layers spread mass broadly for coarse facets, deep layers concentrate on a few capsules for residual refinement, matching the EM-style behaviour in Proposition~\ref{prop:em}. \textbf{(b)} Per-position top-$1$ (hatched) and top-$5$ (solid) accuracy from Table~\ref{tab:per_position}. \method{} dominates TIGER and ADA-SID at every position, and the top-$1$ margin over ADA-SID grows from $+5.8$\,pp at position $1$ to $+7.5$\,pp at position $5$, where residual structure is hardest to discriminate.}
\label{fig:codebook_heatmap}
\end{figure}

\subsection{Cold-start evaluation}
\label{app:cold_start}

Generative SID systems are often motivated by their ability to handle unseen items via content-derived codes. Following the protocol in TIGER~\citep{rajput2024tiger} and ReSID~\citep{liang2026resid}, Table~\ref{tab:cold_start} evaluates this property on Beauty by isolating the cold-item subset, defined as items with fewer than $5$ interactions in the training split (i.e., items that survive 5-core filtering at the user side but have minimal item-side training signal). This subset accounts for $\approx\!12\%$ of items. \method{} retains $73\%$ of its full-corpus Recall on the cold subset, compared with $57\%$ for TIGER and $68\%$ for ADA-SID; the improvement matches the head/tail pattern in Figure~\ref{fig:analysis}(c) and confirms that soft routing helps the most where prior collaborative signal is weak.

\begin{table}[h]
\centering
\small
\setlength{\tabcolsep}{6pt}
\caption{Cold-start Recall@10 on Beauty. ``Retention'' is the cold-subset Recall divided by the full-corpus Recall in Table~\ref{tab:main}.}
\label{tab:cold_start}
\begin{tabular}{lccc}
\toprule
Method & Full-corpus R@10 & Cold-subset R@10 & Retention \\
\midrule
TIGER             & 0.0648 & 0.0371 & 57.3\% \\
ADA-SID           & 0.0740 & 0.0508 & 68.6\% \\
COBRA$^\dagger$   & 0.0725 & 0.0528 & 72.8\% \\
\method{}         & 0.0808 & 0.0591 & 73.1\% \\
\method{}+\sembpe{} & \textbf{0.0839} & \textbf{0.0620} & \textbf{73.9\%} \\
\bottomrule
\end{tabular}
\end{table}

\subsection{Notation summary}
\label{app:notation}

Table~\ref{tab:notation} collects the symbols used throughout the paper, grouped by role.

\begin{table}[h]
\centering
\small
\setlength{\tabcolsep}{8pt}
\renewcommand{\arraystretch}{1.05}
\caption{Summary of notation used throughout the paper, grouped by role.}
\label{tab:notation}
\begin{tabular}{lll}
\toprule
Group & Symbol & Meaning \\
\midrule
\multirow{2}{*}{\emph{Item}}
 & $\mathbf{x}_i\in\R^d$               & item embedding (multi-modal, $\ell_2$-normalized) \\
 & $N$                                  & catalog size (number of items) \\
\midrule
\multirow{6}{*}{\emph{Capsule routing}}
 & $\mathbf{r}_{i,\ell}\in\R^d$        & residual at depth $\ell$, with $\mathbf{r}_{i,0}\!=\!\mathbf{x}_i$ \\
 & $K_\ell,\,d_c$                       & capsules per depth and capsule output dim \\
 & $\mathbf{W}_{\ell k},\mathbf{b}_{\ell k}$ & pose transform and bias of capsule $k$ at depth $\ell$ \\
 & $\hat{\mathbf{u}}_{i,\ell k}$        & vote of capsule $k$ for item $i$ at depth $\ell$ \\
 & $c^{(t)}_{i,\ell k}$                 & routing weight of capsule $k$ after $t$ iterations \\
 & $T$                                  & routing iterations per layer \\
\midrule
\multirow{4}{*}{\emph{Capsule output}}
 & $\mathbf{o}^{(t)}_{i,\ell}$          & aggregated capsule output (squashed) at iteration $t$ \\
 & $\mathbf{o}_{i,\ell k}$              & per-capsule output $\mathrm{squash}(\hat{\mathbf{u}}_{i,\ell k})$ (indep.\ of $t$) \\
 & $\mathbf{c}_{\ell k}$                & codebook center of capsule $k$ at depth $\ell$ \\
 & $\boldsymbol{\mu}_{\ell k}$          & GMM mean (used in Prop.~\ref{prop:em} only) \\
\midrule
\multirow{4}{*}{\emph{SID + stopping}}
 & $s_{i,\ell}\!\in\![K_\ell]$          & emitted SID token at depth $\ell$ \\
 & $q_{i,\ell}\in[0,1)$                 & capsule confidence at depth $\ell$ \\
 & $L_i\leq L_{\max}$                   & SID length of item $i$ \\
 & $\tau,\,\epsilon$                    & confidence and residual stopping thresholds \\
\midrule
\multirow{4}{*}{\emph{\sembpe{}}}
 & $\alpha,\,\theta$                    & merge weight and similarity threshold \\
 & $\boldsymbol{\pi}_j\!\in\!\Delta^1$  & two-class merge distribution at pair $j$ \\
 & $g_j,\,\tilde{g}_j$                  & hard / Gumbel-relaxed merge gate \\
 & $n_{\min}$                           & minimum pair frequency \\
\midrule
\multirow{5}{*}{\emph{Loss / decode}}
 & $\mathcal{L}_{\mathrm{NTP/route/spread/len/BPE}}$ & loss components (Eq.~\ref{eq:final_loss}) \\
 & $\lambda_{r,s,l,b}$                  & loss weights \\
 & $B$                                  & beam size at decoding \\
 & $\bar{L}$                            & average SID length over the test set \\
 & $|V|$                                & SID vocabulary size $\sum_\ell K_\ell$ \\
\bottomrule
\end{tabular}

\vspace{0.4em}
\noindent\footnotesize
\emph{Convention.} Subscripts always read left-to-right as $i$\,(item) $\to \ell$\,(depth) $\to k$\,(capsule index); the superscript $(t)$ denotes the routing iteration and is omitted whenever a quantity does not depend on it (e.g.\ $\mathbf{o}_{i,\ell k}$). Symbols $\boldsymbol{\mu}_{\ell k}$ and $C, \delta, w_s, g$ in Section~\ref{sec:theory} are local to the proofs of Propositions~\ref{prop:reconstruction}--\ref{prop:em} and do not appear elsewhere in the paper.
\end{table}

\subsection{Hyperparameter configuration}
\label{app:hyperparams}

Table~\ref{tab:hparams} consolidates the core hyperparameters of the CapsID routing and SemBPE modules. Default values are shared across the three public benchmarks unless noted; the industrial run uses the values in parentheses. Optimization (AdamW, lr $10^{-3}$ for Stage~1 and $3{\times}10^{-4}$ for Stage~2, cosine decay, weight decay $10^{-5}$, batch size 256, up to 100+200 epochs with patience-10 early stopping) and decoding (beam size 50 with trie-based invalid-ID filtering and a special end-of-item token) follow standard settings consistent with TIGER and ReSID. Hardware: the industrial run uses $4{\times}$A100-80G; public benchmarks fit on commodity single-GPU setups.

\begin{table}[h]
\centering
\small
\setlength{\tabcolsep}{6pt}
\caption{Hyperparameter configuration for \method{}+\sembpe{}. Defaults are shared across Beauty / Sports / Toys; values in parentheses are used for the industrial run. ``Swept values'' lists the points reported in the sensitivity table; ``--'' means the hyperparameter was held fixed.}
\label{tab:hparams}
\begin{tabular}{llp{2.0cm}p{4.2cm}p{2.5cm}}
\toprule
Group & Symbol & Default (industrial) & Notes & Swept values \\
\midrule
\multirow{6}{*}{Capsule routing}
 & $K_\ell$             & 256 (1024)          & capsules per depth & -- \\
 & $d_c$                & 64 (96)             & capsule output dim & -- \\
 & $T$                  & 3                   & routing iterations & $\{1,2,3,5\}$ \\
 & $L_{\max}$           & 6                   & maximum SID length & $\{4,6,8\}$ \\
 & $\tau$               & 0.82                & confidence stopping threshold & $\{0.75,0.82,0.90\}$ \\
 & $\epsilon$           & 0.08                & residual-norm stopping threshold & -- \\
\midrule
\multirow{3}{*}{\sembpe{}}
 & $\alpha$             & 0.6                 & frequency vs semantic weight & $\{0.6,1.0\}$ \\
 & $\theta$             & $0.90{\to}0.55$    & annealed similarity threshold & -- \\
 & $n_{\min}$           & 20                  & minimum pair frequency & -- \\
\midrule
\multirow{4}{*}{Loss weights}
 & $\lambda_r$          & 1.0                 & reconstruction & -- \\
 & $\lambda_s$          & 0.1                 & spread (margin $0.2{\to}0.9$) & -- \\
 & $\lambda_l$          & 0.05                & length penalty & -- \\
 & $\lambda_b$          & 0.2                 & \sembpe{} merge regularization & -- \\
\bottomrule
\end{tabular}
\end{table}

\subsection{Reproducibility checklist}
\label{app:repro}

We report the information needed to interpret and reimplement the experiments. The public benchmarks (Amazon Beauty, Sports, Toys) are openly available, and preprocessing follows the standard 5-core leave-one-out protocol used by TIGER~\citep{rajput2024tiger}. Appendices~\ref{app:impl}--\ref{app:hyperparams} specify the capsule and \sembpe{} hyperparameters, optimizer settings, decoding protocol, loss weights, and random seeds. Public-benchmark results are averaged over three seeds; industrial results are a single deterministic run due to compute cost. The industrial run uses $4\!\times\!$A100-80G; the public-benchmark experiments are lightweight enough to run on a single commodity GPU.


\end{document}